\newtheorem{definition}{Definition}
\newtheorem{theorem}{Theorem}
\newtheorem{proposition}[theorem]{Proposition}
\newtheorem{corollary}[theorem]{Corollary}
\newtheorem{lemma}[theorem]{Lemma}
\begin{document}
\title{Logical locality entails \\
frugal distributed computation over graphs}
\author{St\'{e}phane Grumbach  \hspace*{1cm} Zhilin Wu\\
INRIA-LIAMA\thanks{CASIA -- PO Box 2728 --
Beijing 100080 -- PR China -- Stephane.Grumbach@inria.fr  zlwu@liama.ia.ac.cn}\\
Chinese Academy of Sciences
}

\maketitle

\begin{abstract}
First-order logic is known to have limited expressive power over
finite structures. It enjoys in particular the locality property,
which states that first-order formulae cannot have a global view of
a structure. This limitation ensures on their low sequential
computational complexity. We show that the locality impacts as well
on their distributed computational complexity. We use first-order
formulae to describe the properties of finite connected graphs,
which are the topology of communication networks, on which the
first-order formulae are also evaluated. We show that over bounded
degree networks and planar networks, first-order properties can be
frugally evaluated, that is, with only a bounded number of messages,
of size logarithmic in the number of nodes, sent over each link.
Moreover, we show that the result carries over for the extension of
first-order logic with unary counting.

\end{abstract}

\vspace*{-1em}

\section{Introduction}


Logical formalisms have been widely used in many areas of computer
science to provide high levels of abstraction, thus offering
user-friendliness while increasing the ability to perform
verification.  In the field of databases, first-order logic
constitutes the basis of relational query languages, which allow to
write queries in a declarative manner, independently of the physical
implementation. In this paper, we propose to use logical formalisms
to express properties of the topology of communication networks,
that can be verified in a distributed fashion over the networks
themselves.

We focus on first-order logic over graphs. First-order logic has
been shown to have limited expressive power over finite structures.
In particular, it enjoys the locality property, which states that
all first-order formulae are local \cite{Gaifman82}, in the sense
that local areas of the graphs are sufficient to evaluate them.

First-order properties have been shown to be computable with very
low complexity in both sequential and parallel models of
computation. It was shown that first-order properties can be
evaluated in linear time over classes of bounded degree graphs
\cite{Seese95} and over classes of locally tree-decomposable
graphs\footnote{Locally tree-decomposable graphs generalize bounded
degree graphs, planar graphs, and graphs of bounded genus.}
\cite{FG01}. These results follow from the locality of the logic. It
was also shown that they can be evaluated in constant time over
Boolean circuits with unbounded fan-in (AC$^0$) \cite{Immerman89}.
These bounds lead us to be optimistic on the complexity of the
distributed evaluation of first-order properties.

We consider communication networks based on the message passing
model \cite{AttiyaW04}, where nodes exchange messages with their
neighbors. The properties to be evaluated concern the graph which
forms the topology of the network, and whose knowledge is
distributed over the nodes, who are only aware of their $1$-hop
neighbors. We thus focus on connected graphs.

In distributed computing, the ability to solve problems locally has
attracted a strong interest since the seminal paper of Linial
\cite{Linial92}.  The ability to solve global problems in
distributed systems, while performing as much as possible local
computations, is of great interest in particular to ensure
scalability. Moreover relying as much as possible on local
information improves fault-tolerance. Finally, restricting the
computation to local areas allows to optimize time and communication
complexity.

Naor and Stockmeyer \cite{NaorS95} showed that there were
non-trivial locally checkable labelings that are locally computable,
while on the other hand lower-bounds have been exhibited, thus
resulting in non-local computability results
\cite{KuhnMW04,KuhnMW06}.

Different notions of local computation have been considered. The
most widely accepted restricts the time of the computation to be
constant, that is independent of the size of the network
\cite{NaorS95}, while allowing messages of size $O(\log n)$, where
$n$ is the size of the network. This condition is rather stringent.
Naor and Stockmeyer \cite{NaorS95} show their result for a
restricted class of graphs (eg bounded odd degree). Godard et al.
used graph relabeling systems as the distributed computational
model, defined local computations as graph relabeling systems with
locally-generated local relabeling rules, and characterized the
classes of graphs that are locally computable \cite{GodardMM04}.

Our initial motivation is to understand the impact of the logical
locality on the distributed computation, and its relationship with
local distributed computation. It is easy to verify though that
there are simple properties (expressible in first-order logic) that
cannot be computed locally. Consider for instance the property
``There exist at least two distinct triangles'', which requires
non-local communication to check the distinctness of the two
triangles which may be far away from each other. Nevertheless,
first-order properties do admit simple distributed computations.

We thus introduce frugal distributed computations. A distributed
algorithm is \emph{frugal} if during its computation only a bounded
number of messages of size $O(\log n)$ are sent over each link. If
we restrict our attention to bounded degree networks, this implies
that each node is only receiving a bounded number of messages.
Frugal computations resemble local computations over bounded degree
networks, since the nodes are receiving only a bounded number of
messages, although these messages can come from remote nodes through
multi-hop paths.

We prove that first-order properties can be frugally evaluated over
bounded degree networks and planar networks
(Theorem~\ref{thm:FO-bounded-degree} and
Theorem~\ref{thm:FO-planar}). The proofs are obtained by
transforming the centralized linear time evaluation algorithms
\cite{Seese95,FG01} into distributed ones satisfying the restriction
that only a bounded number of messages are sent over each link.
Moreover, we show that the results carry over to the extension of
first-order logic with unary counting. While the transformation of
the centralized linear time algorithm is simple for first-order
properties over bounded degree networks, it is quite intricate for
first-order properties over planar networks. The most intricate part
is the distributed construction of an ordered tree decomposition for
some subgraphs of the planar network, inspired by the distributed
algorithm to construct an ordered tree decomposition for planar
networks with bounded diameter in \cite{GW09}.

Intuitively, since in the centralized linear time computation each
object is involved only a bounded number of times, in the
distributed computation, a bounded number of messages sent over each
link could be sufficient to evaluate first-order properties. So it
might seem trivial to design frugal distributed algorithms for
first-order properties over bounded degree networks and planar
networks. Nevertheless, this is not the case, because in the
centralized computation, after visiting one object, any other object
can be visited, but in the distributed computation, only the
\emph{adjacent} objects (nodes, links) can be visited.

The paper is organized as follows. In the next section, we recall
classical graph theory concepts, as well as  Gaifman's locality
theorem. In Section~\ref{sec-dist-FO}, we consider the distributed
evaluation of first-order properties over respectively bounded
degree and planar networks. Finally, in Section~\ref{sec-beyond-FO},
we consider the distributed evaluation of first-order logic with
unary counting. Proofs can be found in the appendix.

\vspace*{-1em}

\section{Graphs, first-order logic and locality}\label{sec-FO}

In this paper, our interest is focused to a restricted class of
structures, namely finite graphs. Let $G=(V,E)$, be a finite graph.
We use the following notations. If $v \in V$, then $deg(v)$ denotes
the {\it degree} of $v$. For two nodes $u,v \in V$, the {\it
distance} between $u$ and $v$, denoted $dist_G(u,v)$, is the length
of the shortest path between $u$ and $v$. For $k \in \mathds{N}$,
the {\it $k$-neighborhood} of a node $v$, denoted $N_k(v)$, is
defined as $\{w \in V | dist_G(v,w) \le k\}$. If $\bar{v}=v_1...v_p$
is a collection of nodes in $V$, then the $k$-neighborhood of
$\bar{v}$, denoted $N_k(\bar{v})$, is defined by $\bigcup_{1 \le i
\le p} N_k(v_i)$.
For $X \subseteq V$, let $\langle X \rangle^G$ denote the subgraph
induced by $X$.

Let $G=(V,E)$ be a connected graph, a \emph{tree decomposition} of
$G$ is a rooted labeled tree $\mathcal{T}=(T,F, r, B)$, where $T$ is
the set of vertices of the tree, $F \subseteq T \times T$ is the
child-parent relation of the tree, $r \in T$ is the root of the
tree, and $B$ is a labeling function from $T$ to $2^V$, mapping
vertices $t$ of $T$ to sets $B(t)\subseteq V$, called \emph{bags},
such that
\vspace*{-2mm}
\begin{enumerate}
\item For each edge $(v,w) \in E$, there is a $t \in T$, such that $\{v,w\} \subseteq B(t)$.
\item For each $v \in V$, $B^{-1}(v)=\{t \in T | v \in B(t)\}$ is connected in
$T$.
\end{enumerate}
\vspace*{-2mm}
The {\it width} of $\mathcal{T}$, $width(\mathcal{T})$, is defined
as $\max\{|B(t)|-1 | t \in T\}$. The tree-width of $G$, denoted
$tw(G)$, is the minimum width over all tree decompositions of $G$.
An \emph{ordered tree decomposition} of width $k$ of a graph $G$ is
a rooted labeled tree $\mathcal{T}=(T,F,r,L)$ such that:
\vspace*{-2mm}
\begin{itemize}
\item $(T,F,r)$ is defined as above,

\item $L$ assigns each vertex $t \in T$ to a $(k+1)$-tuple
$\overline{b^t}=(b^t_1,\cdots,b^t_{k+1})$ of vertices of $G$ (note
that in the tuple $\overline{b^t}$, vertices of $G$ may occur
repeatedly),

\item If $L^\prime(t):=\{b^t_j| L(t)= (b^t_1,\cdots,b^t_{k+1}), 1 \le j \le k+1\}$, then
$(T,F,r,L^\prime)$ is a tree decomposition.
\end{itemize}
\vspace*{-2mm}
The \emph{rank} of an (ordered) tree decomposition is the rank of
the rooted tree, i.e. the maximal number of children of its
vertices.

We consider first-order logic (FO) over the signature $E$, where $E$
is a binary relation symbol. The syntax and semantics of first-order
formulae are defined as usual \cite{EbbinghausFlum99}. The
\emph{quantifier rank} of a formula $\varphi$ is the maximal number
of nestings of existential and universal quantifiers in $\varphi$.

A \emph{graph property} is a class of graphs closed under
isomorphisms. Let $\varphi$ be a first-order sentence, the graph
property defined by $\varphi$, denoted $\mathcal{P}_\varphi$, is the
class of graphs satisfying $\varphi$.


The distance between nodes can be defined by first-order formulae
$dist(x,y) \le k$ stating that the distance between $x$ and $y$ is
no larger than $k$, and $dist(x,y) > k$ is an abbreviation of $\neg
dist(x,y) \le k$. In addition, let $\bar{x}=x_1...x_p$ be a list of
variables, then $dist(\bar{x},y) \le k$ is used to denote
$\mathop{\vee} \limits_{1 \le i \le p} dist(x_i,y) \le k$.

Let $\varphi$ be a first-order formula, $k \in \mathds{N}$, and
$\bar{x}$ be a list of variables not occurring in $\varphi$, then
the formula bounding the quantifiers of $\varphi$ to the
$k$-neighborhood of $\bar{x}$, denoted $\varphi^{(k)}(\bar{x})$, can
be defined easily in first-order logic by using formulae
$dist(\bar{x},y) \le k$. For instance, if $\varphi:= \exists y
\psi(y)$, then
\[\varphi^{(k)}(\bar{x}) := \exists y \left(dist(\bar{x},y) \le k \wedge
\left(\psi(y)\right)^{(k)}(\bar{x})\right).\]

We can now recall the notion of logical  locality introduced by
Gaifman  \cite{Gaifman82,EbbinghausFlum99}.
\vspace*{-.5em}
\begin{theorem}\label{the-gaifman}
\cite{Gaifman82} Let $\varphi$ be a first-order formula with free
variables $u_1,...,u_p$, then $\varphi$ can be written in {\it
Gaifman Normal Form}, that is into a Boolean combination of (i)
sentences of the form:
\begin{equation}\label{eqn:gaifman-theorem}
\exists x_1 ... \exists x_s \left(\bigwedge \limits_{1 \le i<j \le s} d(x_i,x_j)
> 2r \wedge \bigwedge \limits_{i} \psi^{(r)}(x_i)\right)
\end{equation}
and (ii) formulae of the form $\psi^{(t)}(\overline{y})$, where
$\overline{y} = y_1...y_q$ such that $y_i \in \{u_1,...,u_p\}$ for
all $1 \le i \le q$, $r \le 7^{k-1}$, $s \le p + k$, $t \le
\left(7^k-1\right)/2$ ($k$ is the quantifier rank of
$\varphi$)\footnote{The bound on $r$ has been improved to $4^k-1$ in
\cite{KeislerL04}}. \\
Moreover, if $\varphi$ is a sentence, then the
Boolean combination contains only sentences of the form
(\ref{eqn:gaifman-theorem}).
\end{theorem}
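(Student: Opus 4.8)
The plan is to prove the theorem by induction on the quantifier rank $k$ of $\varphi$, establishing the slightly stronger statement that \emph{every} formula (not only sentences) is equivalent to a Boolean combination of basic local sentences of the form (\ref{eqn:gaifman-theorem}) and local formulas $\psi^{(t)}(\bar{y})$ whose parameters $\bar{y}$ are among the free variables. The base case $k=0$ is immediate, since a quantifier-free formula is already $0$-local, i.e. of type (ii) with $t=0$. As the class of Boolean combinations of the two prescribed forms is trivially closed under negation and conjunction, it suffices to treat the single existential step: assuming the claim for rank $k-1$, I would take $\varphi(\bar{u}) = \exists x\, \chi(x,\bar{u})$ with $\chi$ already in normal form, and show that $\exists x\, \chi$ can be rewritten in the same shape.

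First I would dispose of the easy parts. The basic local sentences occurring in $\chi$ contain no free variable, so they commute with $\exists x$ and can be pulled outside. Writing the remainder of $\chi$ in disjunctive normal form and distributing $\exists x$ over the disjunction, each disjunct becomes $\exists x\big(\bigwedge_i \xi_i^{(s_i)}(\bar{z}_i)\big)$ with $\bar{z}_i \subseteq \{x\}\cup\{u_1,\dots,u_p\}$. The conjuncts in which $x$ does not occur are constant with respect to the quantifier and factor out, while the conjuncts containing $x$ can be merged into a single formula $\beta^{(s)}(x,\bar{u})$ that is $s$-local around the tuple $(x,\bar{u})$, with $s$ the maximum of the radii involved.

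The heart of the argument is then the analysis of $\exists x\, \beta^{(s)}(x,\bar{u})$, which I would carry out by a separation argument splitting on whether the witness $x$ is \emph{close} to the parameters, $dist(\bar{u},x)\le 2s+1$, or \emph{far}, $dist(\bar{u},x)>2s+1$. In the close case $N_s(x)\subseteq N_{3s+1}(\bar{u})$, so the guarded formula $\exists x\,(dist(\bar{u},x)\le 2s+1 \wedge \beta^{(s)}(x,\bar{u}))$ is itself local around $\bar{u}$ with a larger radius, hence already of type (ii). In the far case $N_s(x)$ and $N_s(\bar{u})$ are disjoint; since $\beta$ only inspects these two neighborhoods, a Feferman--Vaught style factorization lets me rewrite $\beta^{(s)}(x,\bar{u})$, under the disjointness hypothesis, as a finite disjunction $\bigvee_j \big(\gamma_j^{(s)}(x)\wedge \delta_j^{(s)}(\bar{u})\big)$ that separates the contribution of $x$ from that of $\bar{u}$.

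It remains to eliminate the parameters from the residual formula $\exists x\,(dist(\bar{u},x)>2s+1 \wedge \gamma_j^{(s)}(x))$, and this is where I expect the real difficulty to lie. The idea is a \emph{scattered counting} argument: if the graph contains sufficiently many pairwise $2s$-distant realizations of $\gamma_j$ --- a condition that is exactly a basic local sentence of the form (\ref{eqn:gaifman-theorem}) --- then, because each of the $p$ parameters can be close to only a bounded number of such scattered witnesses, one of them must lie far from all of $\bar{u}$; conversely, whether a far witness exists below that threshold can be decided from the number of $\gamma_j$-realizations near $\bar{u}$, which is again a local formula around $\bar{u}$. Assembling these equivalences expresses the residual formula as a Boolean combination of sentences of type (i) and local formulas of type (ii). The remaining labor is the careful bookkeeping of how the radii and quantifier counts grow through the induction, which is precisely what yields the explicit bounds $r\le 7^{k-1}$, $s\le p+k$ and $t\le (7^k-1)/2$; this part is tedious but mechanical, whereas the scattered counting step is the genuine obstacle.
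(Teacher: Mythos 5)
You should first be aware that the paper offers no proof of this theorem: it is quoted directly from Gaifman's 1982 paper (with a footnote pointing to Keisler--Lotfallah for the sharper bound on $r$), so there is no in-paper argument to compare yours against; your sketch has to be judged against the classical proof from the literature. Against that benchmark it holds up well: the induction on quantifier rank, the reduction of the quantifier step to $\exists x\,\beta^{(s)}(x,\bar u)$, the case split on $dist(\bar u,x)\le 2s+1$ versus $dist(\bar u,x)>2s+1$ (this threshold is exactly what makes $N_s(x)$ and $N_s(\bar u)$ non-adjacent, so the far case factors as you claim), and the scattered-set pigeonhole are precisely the skeleton of Gaifman's original argument, as presented for instance in Ebbinghaus--Flum. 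Two details of your final, ``genuinely hard'' step need repair when you flesh it out. First, the scattering distance must exceed $2(2s+1)$, not $2s$: the pigeonhole needs that two witnesses within $2s+1$ of the same parameter violate scatteredness, and after inflating the locality radius of $\gamma_j$ from $s$ to $2s+1$ this choice also makes the counting sentences literally of the form~(\ref{eqn:gaifman-theorem}). Second, below the threshold, ``the number of $\gamma_j$-realizations near $\bar u$'' is not quite the right local datum. What works is the conjunction of (a) the global sentences fixing the maximum scattered-set size at some $i\le p$, (b) a local formula asserting that $i$ pairwise scattered realizations already occur inside $N_{2s+1}(\bar u)$, and (c) a local formula asserting that the annulus between radii $2s+1$ and $6s+3$ around $\bar u$ contains no realization: by maximality of the scattered set in (b), every realization lies within scattering distance of it, hence inside $N_{6s+3}(\bar u)$, and (c) then pushes all realizations inside $N_{2s+1}(\bar u)$, i.e.\ no far witness exists; conversely, if no far witness exists, all realizations sit in $N_{2s+1}(\bar u)$ and (a)--(c) follow. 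With these corrections --- which are refinements of, not departures from, your outline --- your proposal closes into a complete and correct proof of the statement, including the ``moreover'' clause, since when no free variables remain the close case is vacuous and $\exists x\,\beta^{(s)}(x)$ is itself a basic local sentence.
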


The locality of first-order logic is a powerful tool to demonstrate
non-definability results \cite{Libkin97}. It can be used in
particular to prove that counting properties, such as the parity of
the number of vertices, or recursive properties, such as the
connectivity of a graph, are not first-order.

\vspace*{-1em}
\section{Distributed evaluation of FO}\label{sec-dist-FO}
We consider a message passing model of distributed computation
\cite{AttiyaW04}, based on a communication network whose topology is
given by a graph $G=(V,E)$ of diameter $\Delta$, where  $E$ denotes
the set of bidirectional \emph{communication links} between nodes.
From now on, we restrict our attention to  \emph{finite connected
graphs}.

We assume that the distributed system is asynchronous and has no
failure. The nodes have a unique \emph {identifier} taken from
$1,2,\cdots,n$, where $n$ is the number of nodes. Each node has
distinct local ports for distinct links incident to it. The nodes
have {\it states}, including final accepting or rejecting states.

For simplicity, we assume that there is only one query fired in the
network by a {\it requesting node}. We assume also that a {\it
breadth-first-search (BFS) tree} rooted on the requesting node has
been pre-computed in the network, such that each node stores locally
the identifier of its parent in the BFS-tree, and the states of the
ports with respect to the BFS-tree, which are either ``parent'' or
``child'', denoting the ports corresponding to the tree edges, or
``horizon'', ``upward'', ``downward'', denoting the ports
corresponding to the non-tree edges to some node with the same,
smaller, or larger depth in the BFS-tree. The computation
terminates, when the requesting node reaches a final state.

Let $\mathcal{C}$ be a class of graphs. A distributed algorithm is
said to be \emph{frugal} over $\mathcal{C}$ if there is a $k \in
\mathds{N}$ such that for any network $G \in \mathcal{C}$ of $n$
nodes and any requesting node in $G$, the distributed computation
terminates, with only at most $k$ messages of size $O(\log n)$ sent
over each link. If we restrict our attention to bounded degree
networks, frugal distributed algorithms implies that each node only
receives a bounded number of messages. Frugal computations resemble
local computations over bounded degree networks, since the nodes
receive only a bounded number of messages, although these messages
can come from remote nodes through multi-hop paths.

Let $\mathcal{C}$ be a class of graphs, and $\varphi$ an FO
sentence, we say that $\varphi$ can be distributively evaluated over
$\mathcal{C}$ if there exists a distributed algorithm such that for
any network $G \in \mathcal{C}$ and any requesting node in $G$, the
computation of the distributed algorithm on $G$ terminates with the
requesting node in the accepting state if and only if $G \models
\varphi$. Moreover, if there is a frugal distributed algorithm to do
this, then we say that $\varphi$ can be frugally evaluated over
$\mathcal{C}$.


For centralized computations, it has been shown that Gaifman's
locality of FO entails linear time evaluation of FO properties over
classes of bounded degree graphs and classes of locally
tree-decomposable graphs \cite{Seese95,FG01}. In the following, we
show that it is possible to design frugal distributed evaluation
algorithms for FO properties over bounded degree and planar
networks, by carefully transforming the centralized linear time
evaluation algorithms into distributed ones with computations on
each node well balanced.

\vspace*{-.5em}

\subsection{Bounded degree networks}\label{sec-FO-bounded-degree}

We first consider the evaluation of FO properties over bounded
degree networks. We assume that each node stores the degree bound
$k$ locally.

\vspace*{-.5em}

\begin{theorem}\label{thm:FO-bounded-degree}
FO properties can be frugally evaluated over bounded degree
networks.
\end{theorem}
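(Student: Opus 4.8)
The plan is to exploit Gaifman's locality theorem (Theorem~\ref{the-gaifman}) to reduce the evaluation of an FO sentence $\varphi$ to the detection of local neighborhood types scattered across the network, and then to show that this detection can be carried out frugally. Since $\varphi$ is a sentence, its Gaifman Normal Form is a Boolean combination of \emph{basic local sentences} of the form~(\ref{eqn:gaifman-theorem}), so it suffices to frugally evaluate each such sentence and combine the Boolean outcomes at the requesting node (one final aggregation pass up the BFS-tree). Fix one basic local sentence asserting the existence of $s$ nodes $x_1,\dots,x_s$ that are pairwise at distance $>2r$ and each satisfy $\psi^{(r)}(x_i)$, where $r\le 4^k-1$ and $s\le p+k$ are constants determined by $\varphi$ alone.

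The key observation is that over a bounded degree network with degree bound~$d$, the $r$-neighborhood $N_r(v)$ of any node $v$ has size at most $1+d+d^2+\cdots+d^r$, a \emph{constant} independent of $n$. First I would have each node $v$ collect the isomorphism type of its labeled $r$-neighborhood $\langle N_r(v)\rangle^G$ (with $v$ marked as center). This is done by $r$ rounds of flooding restricted to radius $r$: in each round a node forwards to its neighbors the portion of its neighborhood discovered so far, together with node identifiers to allow stitching overlapping fragments and detecting cycles. Because the neighborhood has constant size, each message has size $O(\log n)$ (a constant number of identifiers), and because the radius is the constant $r$, each link carries only a constant number of such messages; this phase is therefore frugal. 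After this phase every node $v$ can locally decide the truth of $\psi^{(r)}(v)$, since it knows its entire $r$-neighborhood.

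The remaining task is to decide whether among the nodes satisfying $\psi^{(r)}$ one can select $s$ that are pairwise at distance $>2r$. Here I would use the fact that $s$ is a constant and that ``$\psi^{(r)}$-witnesses'' that are \emph{close} (within distance $2r$) form clusters of constant diameter, hence can be locally resolved: within each radius-$2r$ ball one keeps only a bounded number of representatives. The global selection of $s$ pairwise-far witnesses is then an existential combinatorial question that I would resolve by a single convergecast over the BFS-tree: each node reports upward a bounded-size summary recording, for the subtree below it, how many pairwise-far $\psi^{(r)}$-witnesses have been found and a constant-size ``frontier'' of the most recently used witnesses (those near the subtree boundary) needed to check the distance-$>2r$ constraint against witnesses in sibling subtrees. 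Since $s$ is constant and the frontier information per node is of constant size, each message is $O(\log n)$ and each tree link carries $O(1)$ messages, keeping the convergecast frugal.

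The main obstacle I expect is the correctness and frugality of the last step: naively, checking pairwise distance~$>2r$ between witnesses in different subtrees could require propagating unboundedly many candidate witnesses. The resolution is the observation that two $\psi^{(r)}$-witnesses at distance $\le 2r$ are necessarily ``close'' in the BFS-tree in a controlled sense, so the only cross-subtree conflicts involve witnesses within a constant-radius band around the separating node; hence a constant-size frontier summary suffices, and a greedy far-apart selection (always accept a new witness if it is distance $>2r$ from all previously accepted ones) computed during the convergecast correctly determines whether $s$ pairwise-far witnesses exist. Verifying this boundedness carefully—that the convergecast summaries stay of constant size regardless of $n$—is the crux; once established, combining all basic local sentences and their negations by a final Boolean evaluation at the requesting node completes the frugal evaluation of $\varphi$.
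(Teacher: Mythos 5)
Your first phase (collecting constant-size $r$-neighborhoods and deciding $\psi^{(r)}(v)$ locally) is sound and frugal over bounded degree networks. The genuine gap is in the last step, the convergecast that is supposed to decide whether $s$ pairwise-far witnesses exist, and it fails for two separate reasons. First, the supporting claim that two witnesses at graph distance $\le 2r$ are ``close in the BFS-tree in a controlled sense'' is false: in a cycle of length $2n$ with the BFS tree rooted at the requesting node, two \emph{adjacent} nodes at the antipode lie in different subtrees whose only common ancestor is the root, at tree distance about $n/2$ from both. Hence cross-subtree conflicts are not confined to any constant-radius band around the merge node, and a constant-size ``frontier'' summary necessarily misses real distance-$\le 2r$ conflicts between deep witnesses of sibling subtrees. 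Second, the greedy rule (``accept a new witness iff it is $>2r$ from all previously accepted ones'') does not \emph{decide} the existence question: take $s=2$ and witnesses $a,b,c$ with $b,c$ at distance $>2r$ from each other but both within $2r$ of $a$; greedy that starts at $a$ stalls with $l=1$ although the basic local sentence is true. Greedy only certifies the positive direction; when it stalls with $l<s$ one needs a second stage. (In the bounded degree setting this is repairable: if greedy stalls, all witnesses lie in fewer than $s$ balls of radius $2r$, hence there are only constantly many of them, and their constant-size conflict structure can be shipped to the root and solved there. But neither this repair nor a correct substitute for the frontier argument appears in your proposal, and you yourself flag this boundedness as the unverified crux.)

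It is worth noting that the paper avoids the scattered-witness problem entirely for bounded degree networks: instead of Gaifman's theorem it uses Hanf's technique (Theorem~\ref{thm:hanf-theorem}). Each node computes the isomorphism type of its $r$-neighborhood, and then only the \emph{counts of each $r$-type, capped at the threshold $m$}, are aggregated up the BFS tree; since the number of distinct $r$-types with degree bound $d$ is a constant and each count is capped at $m$, this convergecast is trivially frugal, and by Theorem~\ref{thm:hanf-theorem} these capped counts determine the truth of every sentence of quantifier rank $k$. Your Gaifman-based decomposition is essentially the strategy the paper reserves for \emph{planar} networks (Theorem~\ref{thm:FO-planar}), where the stalled-greedy case is handled by building tree decompositions of the bounded-diameter region covered by the selected witnesses and finishing with the automata-theoretic MSO evaluation --- precisely the machinery your convergecast step is missing.
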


\vspace*{-.5em}

Theorem~\ref{thm:FO-bounded-degree} can be shown by using Hanf's
technique \cite{FSV95}, in a way similar to the proof of Seese's
seminal result \cite{Seese95}.

Let $r \in \mathds{N}$, $G=(V,E)$, and $v \in V$, then the
\emph{$r$-type} of $v$ in $G$ is the isomorphism type of $\left(
\langle N_r(v) \rangle^G,v \right)$. Let $r, m \in \mathds{N}$,
$G_1$ and $G_2$ be two graphs, then $G_1$ and $G_2$ are said to be
$(r,m)$-equivalent if and only if for every $r$-type $\tau$, either
$G_1$ and $G_2$ have the same number of vertices with $r$-type
$\tau$ or else both have at least $m$ vertices with $r$-type $\tau$.
$G_1$ and $G_2$ are said to be $k$-equivalent, denoted $G_1
\equiv_{k} G_2$, if $G_1$ and $G_2$ satisfy the same FO sentences of
quantifier rank at most $k$. It has been shown that:

\vspace*{-.5em}

\begin{theorem}\label{thm:hanf-theorem}
\cite{FSV95} Let $k,d \in \mathds{N}$. There exist $r,m \in
\mathds{N}$ such that $r$ (resp. $m$) depends on $k$ (resp. both $k$
and $d$), and for any graphs $G_1$ and $G_2$ with maximal degree no
more than $d$, if $G_1$ and $G_2$ are $(r,m)$-equivalent, then $G_1
\equiv_k G_2$.
\end{theorem}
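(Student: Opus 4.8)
My plan is to prove the implication through the standard Ehrenfeucht--Fra\"{\i}ss\'e game characterization of $\equiv_k$: it suffices to exhibit a winning strategy for Duplicator in the $k$-round EF game on $G_1$ and $G_2$. I would fix the radius $r := 3^k$, which I let shrink geometrically along the game by setting $r_i := 3^{k-i}$ (so $r_0 = r$ and $r_k = 1$), and choose the threshold $m := k \cdot f(d, 2 \cdot 3^{k-1}) + 1$, where $f(d,s) = 1 + d + \cdots + d^s$ bounds the size of any $s$-neighborhood in a graph of maximum degree at most $d$. The bounded-degree hypothesis enters precisely here: it forces $r$-neighborhoods to have bounded size, so that there are only finitely many $r$-types and the balls occupied by already-played pebbles have bounded cardinality.

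The invariant I would maintain is that after round $i$, with pebbles $\bar{a} = a_1 \cdots a_i$ on $G_1$ and $\bar{b} = b_1 \cdots b_i$ on $G_2$, the map $a_j \mapsto b_j$ extends to an isomorphism $h_i \colon \left( \langle N_{r_i}(\bar{a}) \rangle^{G_1}, \bar{a} \right) \cong \left( \langle N_{r_i}(\bar{b}) \rangle^{G_2}, \bar{b} \right)$. The base case $i = 0$ is vacuous, and once the invariant is established at $i = k$ the radius is $1$, so $h_k$ is in particular a partial isomorphism on the pebbled points and Duplicator has won. For the inductive step, suppose Spoiler plays $a_{i+1}$ in $G_1$ (the other side is symmetric) and write $\rho := r_{i+1}$. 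If $a_{i+1}$ is \emph{close}, i.e.\ $dist_{G_1}(a_{i+1}, a_j) \le 2\rho$ for some $j$, then $a_{i+1}$ together with its $\rho$-ball already lies in the domain of $h_i$, so Duplicator answers $b_{i+1} := h_i(a_{i+1})$ and the new invariant is a restriction of $h_i$. If $a_{i+1}$ is \emph{far} from every $a_j$, its $\rho$-neighborhood is disjoint from $N_\rho(\bar{a})$, and Duplicator must find a $b_{i+1}$ of the same $\rho$-type whose $\rho$-ball avoids $N_\rho(\bar{b})$; gluing the two local isomorphisms along disjoint supports then yields $h_{i+1}$.

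The heart of the proof, and the step I expect to be the main obstacle, is the existence of $b_{i+1}$ in the far case. I would first observe that $(r,m)$-equivalence descends to $(\rho,m)$-equivalence for every $\rho \le r$: grouping the $r$-types according to the $\rho$-type they induce, exact agreement (or joint $\ge m$) on each $r$-type forces the same dichotomy on each $\rho$-type. Writing $\sigma$ for the $\rho$-type of $a_{i+1}$, I must then locate a $\sigma$-vertex of $G_2$ lying outside the $2\rho$-balls of $b_1, \dots, b_i$. In the case where both counts are $\ge m$ this is immediate, since the number of \emph{blocked} $\sigma$-vertices is at most $i \cdot f(d, 2\rho) < m$. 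The delicate case is equal counts, where $\sigma$ may be rare, and here I would use $h_i$ itself: because $2\rho < r_i = 3\rho$, every blocked $\sigma$-vertex carries its whole $\rho$-ball inside the domain (resp.\ codomain) of $h_i$, so $h_i$ restricts to a bijection between the blocked $\sigma$-vertices of $G_1$ and those of $G_2$. Hence the two blocked counts coincide, and subtracting them from the equal totals leaves exactly as many free $\sigma$-vertices in $G_2$ as in $G_1$ --- at least one, namely the partner of $a_{i+1}$.

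Getting the radius bookkeeping ($\rho$, $2\rho$, $3\rho = r_i$) and this bijection-of-blocked-vertices argument exactly right is the technical crux. The remaining verifications --- that the two glued maps really form an isomorphism of the new $\rho$-neighborhoods, using that the far balls are disjoint and induce no edges between them --- I expect to be routine. Collecting the dependencies, $r = 3^k$ depends only on $k$ and $m = k \cdot f(d, 2 \cdot 3^{k-1}) + 1$ on both $k$ and $d$, as required.
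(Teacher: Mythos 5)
You should first note that the paper contains no proof of this statement at all: it is quoted verbatim from \cite{FSV95}, so your attempt can only be measured against the standard proof in the literature. That standard proof is exactly the Ehrenfeucht--Fra\"{\i}ss\'e argument you outline --- an invariant isomorphism between shrinking neighborhoods of the pebbled tuples, a close/far case split, a threshold count for abundant types, and a bijection argument for rare types --- so your architecture is the right one, and your descent from $(r,m)$- to $(\rho,m)$-equivalence and your blocked-vertex bijection are both sound.

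There is, however, a genuine off-by-one gap in your constants, and it sits precisely in the step you dismiss as routine. With $r_i = 3^{k-i}$, $\rho = r_{i+1}$, and the cutoff for ``close'' set at $2\rho$, suppose Spoiler plays $a_{i+1}$ with $dist_{G_1}(a_{i+1}, a_j) = 2\rho+1$ for some $j$. This lands in your far case, yet there may exist $x$ with $dist(a_{i+1},x)=\rho$ and $y$ with $dist(a_j,y)=\rho$ such that $(x,y) \in E$: the two $\rho$-balls are disjoint but \emph{not} free of edges between them, so your glued map need not be an isomorphism of induced subgraphs --- Duplicator has no control over whether a matching edge exists on the $G_2$ side, and Spoiler can exploit the discrepancy in later rounds. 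Nor can you rescue this configuration by treating it as close: a point at distance $\rho$ from $a_{i+1}$ can be at distance $3\rho+1 > r_i$ from every pebble, so $h_i$ is not even defined on all of $N_\rho(a_{i+1})$. With your radii, distance exactly $2\rho+1$ is a no-man's land in which the inductive step fails both ways. The repair is standard: choose radii satisfying $r_i \ge 3r_{i+1}+1$, for instance $r_i = \left(3^{k-i+1}-1\right)/2$ (so $r = r_0$ still depends only on $k$, and $r_k = 1$), declare $a_{i+1}$ close when $dist(a_{i+1},\bar{a}) \le 2r_{i+1}+1$, and count as blocked the vertices within distance $2r_{i+1}+1$ of a pebble, enlarging $m$ accordingly. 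Then in the close case $N_{r_{i+1}}(a_{i+1}) \subseteq N_{r_i}(\bar{a})$ with all the needed distance preservation, in the far case $dist(a_{i+1},\bar{a}) \ge 2r_{i+1}+2$ so the glued parts genuinely induce no edges between them, and the remainder of your argument goes through verbatim.
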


\vspace*{-.5em}

Let us now sketch the proof of Theorem~\ref{thm:FO-bounded-degree},
which relies on a distributed algorithm consisting of three phases.
Suppose the requesting node requests the evaluation of some FO
sentence with quantifier rank $k$. Let $r,m$ be the natural numbers
depending on $k,d$ specified in Theorem~\ref{thm:hanf-theorem}.

\vspace*{-.5em}

\begin{description}
\item[Phase I] The requesting node broadcasts messages along the BFS-tree to ask each node to collect the topology information in its $r$-neighborhood;
\vspace*{-.5em}
\item[Phase II] Each node collects the topology information in its
$r$-neighborhood;
\vspace*{-.5em}
\item[Phase III] The $r$-types of the nodes in the network are aggregated
through the BFS-tree to the requesting node up to the threshold $m$
for each $r$-type. Finally the requesting node decides whether the
network satisfies the FO sentence or not by using the information
about the $r$-types.
\end{description}

\vspace*{-.5em}

It is easy to see that only a bounded number of messages are sent
over each link in Phase I and II. Since the total number of distinct
$r$-types with degree bound $d$ depends only upon $r$ and $d$ and
each $r$-type is only counted up to a threshold $m$, it turns out
that over each link, only a bounded number of messages are sent in
Phase III as well. So the above distributed evaluation algorithm is
frugal over bounded degree networks.

\vspace*{-.5em}

\subsection{Planar networks}\label{sec-FO-planar}

We now consider the distributed evaluation of FO properties over
planar networks.

A \emph{combinatorial embedding} of a planar graph $G=(V,E)$ is an
assignment of a cyclic ordering of the set of incident edges to each
vertex $v$ such that two edges $(u,v)$ and $(v,w)$ are in the same
face iff $(v,w)$ is immediately before $(v,u)$ in the cyclic
ordering of $v$. Combinatorial embeddings, which  encode the
information about boundaries of the faces in usual embeddings of
planar graphs into the planes, are useful for computing on planar
graphs. Given a combinatorial embedding, the boundaries of all the
faces can be discovered by traversing the edges according to the
above condition.

We assume in this subsection that a combinatorial embedding of the
planar network is distributively stored in the network, i.e. a
cyclic ordering of the set of the incident links is stored in each
node of the network.
\vspace*{-.5em}

\begin{theorem}\label{thm:FO-planar}
FO properties can be frugally evaluated over planar networks.
\end{theorem}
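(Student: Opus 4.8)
The plan is to reduce the evaluation of an arbitrary FO sentence over a planar network to a distributed computation that exploits the locality of FO (Theorem~\ref{the-gaifman}) together with the bounded tree-width structure of bounded-radius neighborhoods in planar graphs. By Gaifman's theorem, any FO sentence $\varphi$ of quantifier rank $k$ is equivalent to a Boolean combination of \emph{basic local sentences} of the form~(\ref{eqn:gaifman-theorem}), where the radius $r$ and the number $s$ of scattered centers depend only on $k$. Since Boolean combinations are trivial to evaluate once each constituent sentence has been decided at the requesting node, the heart of the matter is to frugally evaluate a single basic local sentence, i.e.\ to decide whether there exist $s$ pairwise $>2r$-distant nodes $x_1,\dots,x_s$ each satisfying the local condition $\psi^{(r)}(x_i)$.

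First I would have each node $v$ determine, in a frugal manner, whether $\psi^{(r)}(v)$ holds. The crucial point is that $\psi^{(r)}(v)$ depends only on the isomorphism type of $\langle N_r(v)\rangle^G$ together with the combinatorial embedding restricted to that neighborhood. In a planar graph, $\langle N_r(v)\rangle^G$ has tree-width bounded by a function of $r$ alone (this is the local tree-decomposability that underlies the Frick--Grohe result \cite{FG01}). So the plan is: each node $v$ gathers the topology and stored cyclic orderings of its $r$-neighborhood, builds locally an \emph{ordered tree decomposition} of $\langle N_r(v)\rangle^G$ of bounded width and bounded rank, and evaluates $\psi^{(r)}(v)$ over it by a standard bottom-up dynamic-programming pass on the decomposition. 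Gathering the $r$-neighborhood costs only $O(1)$ messages of size $O(\log n)$ per link, because $r$ is a constant and each node forwards neighborhood descriptions at most a constant number of hops; this is the planar analogue of Phase~I and Phase~II in the bounded-degree proof. The genuinely technical ingredient here, as the introduction flags, is the \emph{distributed construction of the ordered tree decomposition} of each bounded-radius subgraph, adapting the algorithm of \cite{GW09} for planar networks of bounded diameter; the bounded radius makes the diameter of each piece bounded, so that construction applies locally.

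Once every node knows the truth value of $\psi^{(r)}$, I would reduce the existence of $s$ pairwise far-apart witnesses to a counting/aggregation problem along the BFS-tree. The standard device is that a basic local sentence with $s$ centers at pairwise distance $>2r$ is decidable from knowing, up to a threshold depending only on $s$, a maximal collection of $\psi^{(r)}$-nodes that are mutually $>2r$-separated; equivalently, one can aggregate local counts of $\psi^{(r)}$-witnesses together with enough distance bookkeeping to guarantee scatteredness. Concretely, each node reports to the requesting node, through the BFS-tree, a bounded-size summary recording up to $s$ representatives of $\psi^{(r)}$-nodes in its subtree that are pairwise $>2r$ apart, merging children's summaries while discarding anything beyond the threshold $s$; because only $O(1)$ such representatives (each an identifier of size $O(\log n)$) are ever propagated across a link, the aggregation is frugal. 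The requesting node then checks whether the global summary exhibits $s$ scattered witnesses, decides the basic local sentence, and finally combines the answers for all constituent sentences to decide $\varphi$.

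The main obstacle, as anticipated in the introduction, is the distributed construction of the ordered tree decomposition of each $r$-neighborhood and the correct distributed bookkeeping of \emph{distinctness at distance $>2r$} during aggregation. For the first, one must turn the centralized local tree-decomposition into a distributed procedure confined to a bounded-radius region, reusing the combinatorial embedding stored at the nodes and the machinery of \cite{GW09}; controlling the rank and width, and ensuring each link carries only $O(1)$ messages despite overlapping neighborhoods of adjacent nodes, is delicate. For the second, one must ensure that when two subtrees each contribute a $\psi^{(r)}$-witness, the algorithm can certify they are at distance $>2r$ without flooding the network with pairwise distance queries; keeping the propagated representative sets bounded while preserving the existence of a maximal scattered family is the key invariant to maintain. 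Apart from these two points, all other steps reduce to the locality of FO and to constant-hop, constant-count message passing, which are frugal by construction.
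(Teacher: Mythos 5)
Your overall skeleton (Gaifman normal form, local evaluation of $\psi^{(r)}$, then a scattered-witness check) is the same as the paper's, but both of your core steps fail, and they fail precisely at the points where planar networks differ from bounded-degree ones. First, having each node $v$ gather $\langle N_r(v)\rangle^G$ and evaluate $\psi^{(r)}(v)$ in-node is the bounded-degree argument (Theorem~\ref{thm:FO-bounded-degree}); it is not frugal over planar networks because planar graphs have unbounded degree, so $\langle N_r(v)\rangle^G$ can contain $\Theta(n)$ nodes and a single node can lie in the $r$-neighborhoods of $\Theta(n)$ other nodes. In a star, for $r\ge 2$, every leaf's $r$-neighborhood is the entire graph, so delivering neighborhood descriptions to every leaf forces $\Omega(n)$ messages of size $O(\log n)$ over single links; the fact that $r$ is a constant number of hops does not bound the number of messages per link. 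The paper avoids per-node gathering altogether: it evaluates $\psi^{(r)}(x)$ as a unary query simultaneously for all nodes of each connected component $C_i$ of an annulus $G[i,i+2r]$ around the requesting node (Phases II--IV), exploiting the fact that $C_i$ augmented with a virtual apex vertex is planar of bounded diameter, building a distributed ordered tree decomposition of $C_i$ via \cite{GW09}, and running the tree automaton (three passes) so that the answer bits are computed for all nodes at once and stored distributively; frugality follows because each node belongs to at most $2r+1$ annuli.

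Second, your aggregation step is logically wrong as stated. A maximal family of pairwise $>2r$-separated $\psi^{(r)}$-nodes of size $l<s$ does not determine the answer: take $s=2$ and a path of length $4r$ whose two endpoints and middle node all satisfy $\psi^{(r)}$; the maximal scattered family consisting of the middle node alone has size $1<s$, yet the two endpoints form a scattered pair, and they are exactly the nodes your summary discards. So ``merging children's summaries while discarding anything beyond the threshold $s$'' destroys information that the answer depends on, and no fixed-size set of representatives repairs this. In addition, the merge itself requires graph distances between representatives lying in different BFS subtrees, which are not available from the tree structure and for which you give no frugal computation --- you name this as the ``key invariant to maintain'' but do not maintain it. The paper resolves both issues with a second evaluation stage: it greedily marks $N_{2r}$-balls around chosen witnesses; if the greedy process stops at $l<s$ witnesses, then all $P$-nodes lie in $\bigcup_{1\le i\le l} N_{4r}(w_i)$, whose connected components have diameter less than $4sr$, hence bounded tree-width; a distributed tree decomposition of this region is constructed and the sentence $\varphi'$ over $\{E,P\}$ is decided there by the same automata-theoretic machinery. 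Without these two replacements your plan does not yield a proof.
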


\vspace*{-.5em}

For the proof of Theorem~\ref{thm:FO-planar}, we first recall the
centralized linear time algorithm to evaluate FO properties over
planar graphs in \cite{FG01}\footnote{In fact, in \cite{FG01}, it
was shown that FO is linear-time computable over classes of locally
tree-decomposable graphs.}.

Let $G=(V,E)$ be a planar graph and $\varphi$ be an FO sentence.
From Theorem~\ref{the-gaifman}, we know that $\varphi$ can be
written into Boolean combinations of sentences of the form
(\ref{eqn:gaifman-theorem}),

\vspace*{-.5em}

\begin{equation*}
\exists x_1 ... \exists x_s \left(\bigwedge \limits_{1 \le i<j \le
s} d(x_i,x_j)
> 2r \wedge \bigwedge \limits_{i} \psi^{(r)}(x_i)\right).
\end{equation*}

\vspace*{-.5em}

It is sufficient to show that sentences of the form
(\ref{eqn:gaifman-theorem}) are linear-time computable over $G$. The
centralized algorithm to evaluate FO sentences of the form
(\ref{eqn:gaifman-theorem}) over planar graphs consists of the
following four phases:

\vspace*{-.5em}

\begin{enumerate}
\item Select some $v_0 \in V$, let $\mathcal{H}=\{G[i,i+2r] | i \ge 0\}$, where $G[i,j]=\{v \in V |
i \le dist_G(v_0,v) \le j\}$;

\item For each $H \in \mathcal{H}$, compute $K_r(H)$, where $K_r(H):=\{v \in H | N_r(v) \subseteq H\}$;

\item For each $H \in \mathcal{H}$, compute $P_H:=\{ v \in K_r(H) | \langle H \rangle^G \models
\psi^{(r)}(v)\}$;

\item Let $P:=\cup_H P_H$, determine whether there are $s$ distinct nodes in $P$ such
that their pairwise distance is greater than $2r$.
\end{enumerate}

\vspace*{-.5em}

In the computation of the 3rd and 4th phase above, an
automata-theoretical technique to evaluate Monadic-Second-Order
(MSO) formulae in linear time over classes of graphs with bounded
tree-width \cite{Courcelle90,FlumG06,FFG02} is used. In the
following, we recall this centralized evaluation algorithm.

MSO is obtained by adding set variables and set quantifiers into FO,
such as $\exists X \varphi(X)$ (where $X$ is a set variable). MSO
has been widely studied in the context of graphs for its expressive
power. For instance, $3$-colorability, transitive closure or
connectivity can be defined in MSO \cite{Courcelle08}.

The centralized linear time evaluation of MSO formulae over classes
of bounded tree-width graphs goes as follows. First an ordered tree
decomposition $\mathcal{T}$ of the given graph is constructed. Then
from the given MSO formula, a tree automaton $\mathcal{A}$ is
obtained. Afterwards, $\mathcal{T}$ is transformed into a labeled
tree $\mathcal{T}^\prime$, finally $\mathcal{A}$ is ran over
$\mathcal{T}^\prime$ (maybe several times for formulae containing
free variables) to get the evaluation result.

\medskip

In the rest of this section, we design a frugal distributed
algorithm to evaluate FO sentences over planar networks by adapting
the above centralized algorithm. The main difficulty is to
distribute the computation among the nodes such that only a bounded
number of messages are sent over each link during the computation.

\vspace*{-.5em}

\begin{description}
\item[Phase I] The requesting node broadcasts the FO sentence of the form (\ref{eqn:gaifman-theorem}) to all the
nodes in the network through the BFS tree;
\vspace*{-.5em}

\item[Phase II] For each $v \in V$, compute $C(v):=\{i \ge 0 | v \in G[i,i+2r]\}$;
\vspace*{-.5em}

\item[Phase III] For each $v \in V$, compute $D(v):=\{i \ge 0 | N_r(v) \subseteq G[i,i+2r]\}$;
\vspace*{-.5em}

\item[Phase IV] For each $i \ge 0$, compute $P_i:=\{ v \in V | i \in D(v), \langle G[i,i+2r] \rangle^G \models \psi^{(r)}(v)\}$;
\vspace*{-.5em}

\item[Phase V] Let $P:=\bigcup_i P_i$, determine whether there are $s$ distinct nodes labeled by $P$ such that
their pairwise distance is greater than $2r$.
\end{description}

\vspace*{-.5em}

Phase I is trivial. Phase II is easy. In the following, we
illustrate the computation of Phase III, IV, and V one by one.

We first introduce a lemma for the computation of Phase III.

For $W \subseteq V$, let $K_i(W):=\{v \in W | N_i(v) \subseteq W\}$.
Let $D_i(v):=\{j \ge 0 | v \in K_i(G[j,j+2r])\}$.

\vspace*{-.5em}

\begin{lemma}\label{lem-kernel}
For each $v \in V$ and $i > 0$, $D_i(v)=C(v) \cap \bigcap
\limits_{w: (v,w) \in E} D_{i-1}(w)$.
\end{lemma}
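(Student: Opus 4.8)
The plan is to prove the recurrence $D_i(v) = C(v) \cap \bigcap_{w:(v,w)\in E} D_{i-1}(w)$ by unfolding the definitions and proving the two inclusions separately. Recall that $D_i(v) = \{j \ge 0 \mid v \in K_i(G[j,j+2r])\}$ and $v \in K_i(W)$ means $v \in W$ and $N_i(v) \subseteq W$. The intuition driving the proof is that $N_i(v) \subseteq W$ iff $v \in W$ and for every neighbor $w$ of $v$ we have $N_{i-1}(w) \subseteq W$; this is just the statement that the $i$-ball around $v$ is the union of $\{v\}$ with the $(i-1)$-balls around its neighbors. The first thing I would do is isolate and prove this neighborhood decomposition as the combinatorial core.

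The key neighborhood fact is: for $i > 0$, $N_i(v) = \{v\} \cup \bigcup_{w:(v,w)\in E} N_{i-1}(w)$. Let me indicate why. If $u \in N_i(v)$ with $u \neq v$, then $dist_G(v,u) = \ell$ for some $1 \le \ell \le i$, so a shortest path from $v$ to $u$ has a first edge $(v,w)$, and then $dist_G(w,u) = \ell - 1 \le i-1$, giving $u \in N_{i-1}(w)$ for that neighbor $w$. Conversely, $v \in N_i(v)$ trivially, and if $u \in N_{i-1}(w)$ for some neighbor $w$ of $v$, then $dist_G(v,u) \le dist_G(v,w) + dist_G(w,u) \le 1 + (i-1) = i$, so $u \in N_i(v)$. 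This establishes the set identity.

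With this in hand, I would fix $j \ge 0$ and set $W = G[j, j+2r]$, then chase the equivalences. The statement $j \in D_i(v)$ unfolds to $v \in K_i(W)$, i.e. $v \in W$ and $N_i(v) \subseteq W$. Using the neighborhood decomposition, $N_i(v) \subseteq W$ holds iff $\{v\} \subseteq W$ and $N_{i-1}(w) \subseteq W$ for every neighbor $w$ of $v$. Now $v \in W$ is exactly $j \in C(v)$ (since $C(v) = \{j \ge 0 \mid v \in G[j,j+2r]\}$). And $N_{i-1}(w) \subseteq W$ together with $w \in W$ is exactly $j \in D_{i-1}(w)$; here I must be slightly careful, because $v \in W$ does not immediately give $w \in W$ for the $K_{i-1}$ membership. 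The cleanest way to handle this is to observe that when $i > 0$ and $(v,w) \in E$, if $N_i(v) \subseteq W$ then in particular $w \in N_1(v) \subseteq N_i(v) \subseteq W$, so the $w \in W$ clause comes for free. Conjoining $j \in C(v)$ with $j \in D_{i-1}(w)$ over all neighbors $w$ yields exactly $j \in C(v) \cap \bigcap_{w:(v,w)\in E} D_{i-1}(w)$, completing the equivalence and hence the lemma.

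The step I expect to require the most care is the bookkeeping around membership of the neighbors $w$ in $W$: the definition of $D_{i-1}(w)$ bundles both $w \in W$ and $N_{i-1}(w) \subseteq W$, whereas the naive decomposition of $N_i(v) \subseteq W$ only directly yields the neighborhood-containment clauses. Verifying that $w \in W$ is automatically implied (via $w \in N_i(v)$ when $i > 0$, which is exactly why the lemma is stated for $i > 0$) is the one place where a careless proof could claim a false equivalence, so I would make that implication explicit rather than leaving it implicit.
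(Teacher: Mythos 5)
Your proposal is correct and follows essentially the same route as the paper's proof: both unfold the definitions of $D_i$, $K_i$, and $C$, and reduce the equivalence to the ball decomposition $N_i(v)=\{v\}\cup\bigcup_{w:(v,w)\in E}N_{i-1}(w)$ for $i>0$. The only difference is that you prove this decomposition and the neighbor-membership bookkeeping (that $w\in W$ comes for free, via $w\in N_1(v)\subseteq N_i(v)$, or equivalently via $w\in N_{i-1}(w)$) explicitly, whereas the paper leaves both implicit in its chain of equivalences.
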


\vspace*{-.5em}

With Lemma~\ref{lem-kernel}, $D(v)=D_r(v)$ can be computed in an
inductive way to finish Phase III: Each node $v$ obtains the
information $D_{i-1}(w)$ from all its neighbors $w$, and performs
the in-node computation to compute $D_i(v)$.

\smallskip

Now we consider Phase IV.

Because $\psi^{(r)}(x)$ is a local formula, $\psi^{(r)}(x)$ can be
evaluated separately over each connected component of $G[i,i+2r]$
and the results are stored distributively.

Let $C_i$ be a connected component of $G[i,i+2r]$, and
$w^i_1,\cdots,w^i_l$ be all the nodes contained in $C_i$ with
distance $i$ from the requesting node. Now we consider the
evaluation of $\psi^{(r)}(x)$ over $C_i$.

Let $C^\prime_i$ be the graph obtained from $C_i$ by including all
ancestors of $w^i_1,\cdots,w^i_l$ in the BFS-tree, and $C^\ast_i$ be
the graph obtained from $C^\prime_i$ by contracting all the
ancestors of $w^i_1,\cdots,w^i_l$ into one vertex, i.e. $C^\ast_i$
has one more vertex, called the virtual vertex, than $C_i$, and this
vertex is connected to $w^i_1,\cdots,w^i_l$. It is easy to see that
$C^\ast_i$ is a planar graph with a BFS-tree rooted on $v^\ast$ and
of depth at most $2r+1$. So $C^\ast_i$ is a planar graph with
bounded diameter.

An ordered tree decomposition for planar networks with bounded
diameter can be distributively constructed with only a bounded
number of messages sent over each link as follows \cite{GW09}:
\vspace*{-.5em}

\begin{itemize}
\item Do a depth-first-search to decompose the network into blocks,
i.e. biconnected components;
\vspace*{-.5em}

\item Construct an ordered tree decomposition for each
nontrivial block: Traverse every face of the block according to the
cyclic ordering at each node, triangulate all those faces, and
connect the triangles into a tree decomposition by utilizing the
pre-computed BFS tree;
\vspace*{-.5em}

\item Finally the tree decompositions for the blocks are connected
together into a complete tree decomposition for the whole network.
\end{itemize}
\vspace*{-.5em}

By using the distributed algorithm for the tree decomposition of
planar networks with bounded diameter, we can construct
distributively an ordered tree decomposition for $C^\ast_i$, while
having the virtual vertex in our mind, and get an ordered tree
decomposition for $C_i$.

With the ordered tree decomposition for $C_i$, we can evaluate
$\psi^{(r)}(x)$ over $C_i$ by using the automata-theoretical
technique, and store the result distributively in the network (each
node stores a Boolean value indicating whether it belongs to the
result or not).

A distributed post-order traversal over the BFS tree can be done to
find out all the connected components of $G[i,i+2r]$'s and construct
the tree decompositions for these connected components one by one.

Finally we consider Phase V.

Label nodes in $\bigcup_i P_i$ with $P$.

Then consider the evaluation of FO sentence $\varphi^\prime$ over
the vocabulary $\{E,P\}$,
\vspace*{-.5em}

\begin{equation*}
\exists x_1 ... \exists x_s \left(\bigwedge \limits_{1 \le i<j \le
s} d(x_i,x_j)
> 2r \wedge \bigwedge \limits_{i} P(x_i)\right).
\end{equation*}
\vspace*{-.5em}

Starting from some node $w_1$ with label $P$, mark the vertices in
$N_{2r}(w_1)$ as $Q$, then select some node $w_2$ outside $Q$, and
mark those nodes in $N_{2r}(w_2)$ by $Q$ again, continue like this,
until $w_l$ such that either $l = s$ or all the nodes with label $P$
have already been labeled by $Q$.

If $l < s$, then label the nodes in $\bigcup \limits_{1 \le i \le l}
N_{4r}(v_i)$ as $I$. Each connected component of $\langle I
\rangle^G$ has diameter no more than $4lr < 4sr$. We can construct
distributively a tree decomposition for each connected component of
$\langle I \rangle^G$, and connect these tree decompositions
together to get a complete tree-decomposition of $\langle I
\rangle^G$, then evaluate the sentence $\varphi^\prime$ by using
this complete tree decomposition.

The details of the frugal distributed evaluation algorithm can be
found in the appendix.
\vspace*{-1em}

\section{Beyond FO properties}\label{sec-beyond-FO}
\vspace*{-.5em}
We have shown that FO properties can be frugally evaluated over
respectively bounded degree and planar networks. In this section, we
extend these results to FO unary queries and some counting extension
of FO.

From Theorem~\ref{the-gaifman}, FO formula $\varphi(x)$ containing
exactly one free variable $x$ can be written into the Boolean
combinations of sentences of the form (1) and the local formulae
$\psi^{(t)}(x)$. Then it is not hard to prove the following result.
\vspace*{-.5em}
\begin{theorem}\label{thm:FO-Unary}
FO formulae $\varphi(x)$ with exactly one free variable $x$ can be
frugally evaluated over respectively bounded degree and planar
networks, with the results distributively stored on the nodes of the
network.
\end{theorem}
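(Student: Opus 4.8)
The plan is to reuse the machinery developed for the sentence case (Theorems~\ref{thm:FO-bounded-degree} and~\ref{thm:FO-planar}), adapting it to handle the single free variable $x$. By Gaifman's theorem (Theorem~\ref{the-gaifman}), any $\varphi(x)$ with one free variable can be written as a Boolean combination of closed sentences of the form (\ref{eqn:gaifman-theorem}) and local formulae $\psi^{(t)}(x)$ where $x$ is the only free variable. The closed sentences have a fixed truth value over the whole network $G$, so evaluating them is exactly the problem already solved: run the frugal distributed algorithms from the previous theorems, and broadcast the resulting Boolean value to every node through the BFS tree. The genuinely new ingredient is the evaluation of the local formulae $\psi^{(t)}(x)$ \emph{simultaneously for every node} $x\in V$, with the answer for each node stored locally at that node.

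For the bounded degree case, I would observe that $\psi^{(t)}(x)$ depends only on the isomorphism type of $\langle N_t(x)\rangle^G$ together with the distinguished vertex $x$, i.e.\ on the $t$-type of $x$. Phase~II of the proof of Theorem~\ref{thm:FO-bounded-degree} already has each node collect the topology of its $r$-neighborhood; taking $r\ge t$ (and noting $t$ is a fixed constant determined by the quantifier rank), every node can compute $\psi^{(t)}(x)$ purely by in-node computation, with no additional messages. Thus each node determines locally whether it satisfies $\psi^{(t)}(x)$, and combining this Boolean with the broadcast truth values of the closed sentences yields the final answer for $\varphi(x)$ at each node. Frugality is immediate since we only add a constant number of broadcast rounds over the already-frugal sentence algorithm.

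For the planar case, the situation is closer to what Phase~IV of Theorem~\ref{thm:FO-planar} already accomplishes. There the algorithm evaluates a local formula $\psi^{(r)}(x)$ at every node and stores the Boolean result distributively, using the ordered tree decompositions of the connected components $C_i$ of the shells $G[i,i+2r]$ and the automata-theoretic MSO technique. The same apparatus evaluates $\psi^{(t)}(x)$ for any fixed $t$: choose the shell width large enough (at least $2t$) that each node's $t$-neighborhood lies inside some shell component, then run the tree-automaton to label every node by whether it satisfies $\psi^{(t)}(x)$. As in the bounded degree case, the closed Gaifman sentences are evaluated once by the frugal sentence algorithm and their truth values broadcast, and each node locally combines these with its stored local-formula bits according to the Boolean combination given by Gaifman's normal form.

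The main obstacle I anticipate is bookkeeping rather than a deep new idea: one must verify that evaluating the local formulae $\psi^{(t)}(x)$ for \emph{all} nodes, while simultaneously running the frugal algorithm for the closed sentences and then combining the two, still keeps the number of messages per link bounded. The risk is that a naive combination runs the two computations independently and incurs additive message overhead, which is fine, versus some unintended interleaving that multiplies message counts by the number of shells or components. Since all the constituent phases are individually frugal and the number of distinct subformulae in the Boolean combination is a constant depending only on $\varphi$, the total message count over each link remains bounded, so frugality is preserved. I would therefore structure the argument as: (i) extract the Gaifman normal form of $\varphi(x)$; (ii) evaluate the constant-many closed sentences frugally and broadcast; (iii) evaluate the constant-many local formulae $\psi^{(t)}(x)$ at every node by reusing Phase~II (bounded degree) or Phase~IV (planar); (iv) combine locally at each node.
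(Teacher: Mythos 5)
Your proposal is correct and follows essentially the same route as the paper, which likewise proves Theorem~\ref{thm:FO-Unary} by writing $\varphi(x)$ in Gaifman normal form as a Boolean combination of sentences of the form (\ref{eqn:gaifman-theorem}) and local formulae $\psi^{(t)}(x)$, evaluating the sentences with the frugal algorithms of Theorems~\ref{thm:FO-bounded-degree} and~\ref{thm:FO-planar}, and evaluating the local formulae at every node via neighborhood collection (bounded degree) and the shell/tree-decomposition machinery of Phase~IV (planar). Your steps (i)--(iv) are a faithful, slightly more detailed rendering of the argument the paper leaves as ``not hard to prove.''
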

\vspace*{-.5em}
%

Counting is one of the ability that is lacking to first-order logic,
and has been added in commercial relational query languages (e.g.
SQL). Its expressive power has been widely studied
\cite{GradelO92,GrumbachT95,Otto96} in the literature. Libkin
\cite{Libkin97} proved that first-order logic with counting still
enjoys Gaifman locality property. We prove that
Theorem~\ref{thm:FO-bounded-degree} and Theorem~\ref{thm:FO-planar}
carry over as well for first-order logic with unary counting.

Let FO($\#$) be the extension of first-order logic with unary
counting. FO($\#$) is a two-sorted logic, the first sort ranges over
the set of nodes $V$, while the second sort ranges over the natural
numbers $\mathds{N}$. The terms of the second sort are defined by:
$t:= \# x.\varphi(x) \  | \ t_1 + t_2 \ | \ t_1 \times t_2$, where
$\varphi$ is a formula over the first sort with one free variable
$x$. Second sort terms of the form $\# x. \varphi(x)$ are called
\emph{basic} second sort terms.

The atoms of FO($\#$) extend standard FO atoms with the following
two unary counting atoms: $t_1 = t_2 \ | \ t_1 < t_2,$ where $ t_1,
t_2$ are second sort terms. Let $t$ be a second sort term of
FO($\#$), $G=(V,E)$ be a graph, then the interpretation of $t$ in
$G$, denoted $t^G$, is defined as follows:
\vspace*{-.5em}
\begin{itemize}
\item $(\# x.\varphi(x))^G$ is the cardinality of $\{v \in V | G\models\varphi(v) \}$;
\vspace*{-.5em}
\item $\left(t_1 + t_2\right)^G$ is the sum of $t_1^G$ and $t_2^G$;
\vspace*{-.5em}
\item $\left(t_1 \times t_2\right)^G$ is the product of $t_1^G$ and $t_2^G$.
\end{itemize}
\vspace*{-.5em}
The interpretation of FO($\#$) formulae is defined in a standard
way.

%
%
\vspace*{-.5em}
\begin{theorem}\label{thm:FO-UCNT-frugal}
FO($\#$) properties can be frugally evaluated over respectively
bounded degree and planar networks.
\end{theorem}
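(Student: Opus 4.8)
The plan is to reduce the evaluation of an $\mathrm{FO}(\#)$ property to a combination of two ingredients that have already been handled: the frugal evaluation of $\mathrm{FO}$ unary queries (Theorem~\ref{thm:FO-Unary}), which computes, for each basic second-sort term $\#x.\varphi(x)$, the membership predicate $\{v \mid G \models \varphi(v)\}$ stored distributively; and a frugal aggregation of these local membership bits into the integer values of the second-sort terms. First I would observe that, by the Gaifman locality of $\mathrm{FO}(\#)$ established by Libkin~\cite{Libkin97}, an $\mathrm{FO}(\#)$ sentence can likewise be put into a Gaifman-style normal form, so it suffices to frugally evaluate the finitely many basic comparisons $t_1 = t_2$ and $t_1 < t_2$ between second-sort terms, where each $t_i$ is built from basic terms $\#x.\varphi_j(x)$ by $+$ and $\times$. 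The first-sort subformulas $\varphi_j(x)$ are ordinary $\mathrm{FO}$ unary queries over the (possibly $P$-expanded) vocabulary, so Theorem~\ref{thm:FO-Unary} gives a frugal distributed subroutine labeling each node $v$ with the Boolean value $G \models \varphi_j(v)$, and this is done only for the bounded number of basic terms appearing in the sentence.

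Next I would carry out the counting and arithmetic. Once each node holds, for each basic term $\#x.\varphi_j(x)$, a single bit recording whether it contributes, the cardinality $(\#x.\varphi_j(x))^G$ is obtained by summing these bits along the BFS-tree: each node adds the counts reported by its children to its own contribution and forwards a single integer to its parent. Since each $\varphi_j$ contributes at most one message per link and there are only a bounded number of basic terms, and each count is at most $n$ so has size $O(\log n)$, this phase is frugal. The requesting node thereby obtains the integer value $(\#x.\varphi_j(x))^G \le n$ for every basic term; it then locally evaluates the finite $+,\times$-expressions to compute $t_1^G, t_2^G$ and finally the truth of $t_1 = t_2$ and $t_1 < t_2$, feeding these Boolean outcomes into the top-level Boolean combination.

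The remaining step is to show that the Gaifman normal form for $\mathrm{FO}(\#)$ reduces cleanly to the above, i.e. that the ``scattered witnesses'' part — the analogue of sentences of the form~(\ref{eqn:gaifman-theorem}) with the local condition $\psi^{(r)}$ replaced by a local $\mathrm{FO}(\#)$ formula — is handled exactly as in Phase~IV and Phase~V of the proof of Theorem~\ref{thm:FO-planar}. Here the only new feature is that the local formula $\psi^{(r)}(x)$ may itself contain counting terms, but counting inside a bounded $r$-neighborhood only ranges over boundedly many nodes, so $\psi^{(r)}(x)$ is equivalent to an ordinary local $\mathrm{FO}$ formula of bounded quantifier rank and falls under the existing machinery; the global scattered-witness test is then unchanged.

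I expect the main obstacle to lie in making precise the Gaifman normal form for $\mathrm{FO}(\#)$ and verifying that the bound on the number of basic counting terms (and on the neighborhood radii $r,t$) depends only on the sentence and not on $n$, so that the number of distributed counting subroutines — and hence the number of messages per link — stays bounded. Once locality pins down these parameters, the frugality of each ingredient follows from Theorem~\ref{thm:FO-Unary} together with the observation that tree-aggregation of boundedly many $O(\log n)$-sized integers is frugal; assembling them is then routine.
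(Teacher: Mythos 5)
Your computational core coincides with the paper's: evaluate each first-order body $\varphi_j(x)$ frugally via Theorem~\ref{thm:FO-Unary}, aggregate the resulting bits up the pre-computed BFS tree as $O(\log n)$-bit partial sums (frugal because there are boundedly many basic terms and each term value is polynomially bounded in $n$), and let the requesting node perform the arithmetic and the comparisons in-node. The gap lies in how you reduce to that situation. The reduction you need is precisely the paper's Lemma~\ref{lem:FO-UCNT-norm}: every FO($\#$) formula is a Boolean combination of (i) plain FO formulae and (ii) sentences $t_1 = t_2$ or $t_1 < t_2$ in which every basic term $\# x.\varphi(x)$ has a first-order body. But this is not a consequence of Libkin's locality theorem \cite{Libkin97}, which is a semantic statement about the locality of query answers and yields no syntactic rewriting; the paper instead obtains the normal form by a simple induction on syntax, the crucial point being that in \emph{unary} counting the body of $\# x.\varphi(x)$ has exactly one free variable, so every counting atom is a \emph{sentence} and can be pulled out of the scope of quantifiers (and of other counting operators) by a case distinction on its truth value. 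You yourself flag ``making precise the Gaifman normal form for FO($\#$)'' as the main obstacle; it is indeed the crux, and it is resolved by this elementary syntactic argument, not by locality.

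Moreover, the extra component your normal form introduces --- scattered-witness sentences of the form (\ref{eqn:gaifman-theorem}) whose local condition $\psi^{(r)}$ is itself an FO($\#$) formula --- is exactly where your argument breaks. You dispose of it by claiming that counting inside an $r$-neighborhood ranges over boundedly many nodes, so that $\psi^{(r)}$ collapses to an ordinary local FO formula. That is true over bounded degree networks, where $|N_r(v)| \le 1 + d + \cdots + d^r$, but false over planar networks: an $r$-neighborhood in a planar graph can contain arbitrarily many nodes (a star is planar, and the $1$-neighborhood of its center is the whole graph), so a counting term relativized to a neighborhood cannot be traded for an FO formula of bounded quantifier rank. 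Since the theorem also claims frugal evaluation over planar networks, this step fails. With the paper's normal form no ``local counting'' formulae ever arise: counting comparisons are separated out as global sentences first, and Gaifman's theorem is applied only afterwards, inside Theorems~\ref{thm:FO-bounded-degree} and~\ref{thm:FO-planar}, to the purely first-order parts.
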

\vspace*{-.5em}
The proof of the theorem relies on a normal form of FO($\#$)
formulae. A sketch can be found in the appendix.

\vspace*{-1em}
\section{Conclusion}\label{sec-conclusion}

The logical locality has been shown to entail efficient computation
of first-order logic over several classes of structures. We show
that if the logical formulae are used to express properties of the
graphs, which constitute the topology of communication networks,
then these formulae can be evaluated very efficiently over these
networks. Their distributed computation, although not local
\cite{Linial92,NaorS95,Peleg00}, can be done {\it frugally}, that is
with a bounded number of messages of logarithmic size exchanged over
each link. The frugal computation, introduced in this paper,
generalizes local computation and offers a large spectrum of
applications. We proved that first-order properties can be evaluated
frugally over respectively bounded degree and planar networks.
Moreover the results carry over to the extension of first-order
logic with unary counting. The distributed time used in the frugal
evaluation of FO properties over bounded degree networks is
$O(\Delta)$, while that over planar networks is $O(n)$.

We assumed that some pre-computations had been done on the networks.
If no BFS-tree has been pre-computed, the construction of a BFS-tree
can be done in $O(\Delta)$ time and with $O(\Delta)$ messages sent
over each link \cite{BDLP08}.

Beyond its interest for logical properties, the frugality of
distributed algorithms, which ensures an extremely good scalability
of their computation, raises fundamental questions, such as deciding
what can be frugally computed. Can a Hamiltonian path for instance
be computed frugally?


\vspace*{-1em}


\begin{thebibliography}{BDLP08}

\setlength{\itemsep}{-2mm}

\bibitem[AW04]{AttiyaW04}
Hagit Attiya and Jennifer Welch. {\em Distributed Computing:
Fundamentals, Simulations and Advanced
  Topics}. Wiley-Interscience, 2004.

\bibitem[BDLP08]{BDLP08}
Christian Boulinier, Ajoy~K. Datta, Lawrence~L. Larmore, and Franck
Petit. Space efficient and time optimal distributed BFS tree
construction. {\em Inf. Process. Lett.}, 108(5):273--278, 2008.

\bibitem[Bod93]{Bodlaender93}
Hans~L. Bodlaender. A linear time algorithm for finding
tree-decompositions of small
  treewidth. In {\em ACM STOC}, 1993.

\bibitem[Cou90]{Courcelle90}
Bruno Courcelle. Graph rewriting: An algebraic and logic approach.
In {\em Handbook of Theoretical Computer Science, Volume B: Formal
  Models and Sematics (B)}, pages 193--242. Elsevier and MIT Press, 1990.

\bibitem[Cou08]{Courcelle08}
Bruno Courcelle. {\em Graph algebras and monadic second-order
logic}. In preparation, to be published by Cambridge University
Press, 2008.

\bibitem[EF99]{EbbinghausFlum99}
H.D. Ebbinghaus and J.~Flum. {\em Finite model theory}. Springer,
1999.

\bibitem[FFG02]{FFG02}
J\"{o}rg Flum, Markus Frick, and Martin Grohe. Query evaluation via
tree-decompositions. {\em J. ACM}, 49(6):716--752, 2002.

\bibitem[FG01]{FG01}
Markus Frick and Martin Grohe. Deciding first-order properties of
locally tree-decomposable
  structures. {\em J. ACM}, 48(6):1184--1206, 2001.

\bibitem[FG06]{FlumG06}
J.~Flum and M.~Grohe. {\em Parameterized Complexity Theory}.
Springer, 2006.

\bibitem[FSV95]{FSV95}
Ronald Fagin, Larry~J. Stockmeyer, and Moshe~Y. Vardi. On a monadic
{NP} vs monadic co-{NP}. {\em Inf. Comput.}, 120(1):78--92, 1995.

\bibitem[Gai82]{Gaifman82}
H.~Gaifman. On local and non-local properties. In {\em Proceedings
of the Herbrand Symposium, Logic Colloquium?1,
  North Holland}, 1982.

\bibitem[GMM04]{GodardMM04}
Emmanuel Godard, Yves M{\'e}tivier, and Anca Muscholl.
Characterizations of classes of graphs recognizable by local
  computations. {\em Theory Comput. Syst.}, 37(2):249--293, 2004.

\bibitem[GO92]{GradelO92}
Erich Gr{\"a}del and Martin Otto. Inductive definability with
counting on finite structures. In {\em Computer Science Logic, CSL},
pages 231--247, 1992.

\bibitem[GT95]{GrumbachT95}
St{\'e}phane Grumbach and Christophe Tollu. On the expressive power
of counting. {\em Theor. Comput. Sci.}, 149(1):67--99, 1995.

\bibitem[GW09]{GW09}
St\'{e}phane Grumbach and Zhilin Wu. On the distributed evaluation
of {MSO} on graphs. manuscript, 2009.

\bibitem[Imm89]{Immerman89}
Neil Immerman. Expressibility and parallel complexity. {\em SIAM J.
Comput.}, 18(3):625--638, 1989.

\bibitem[KL04]{KeislerL04}
H.~Jerome Keisler and Wafik~Boulos Lotfallah. Shrinking games and
local formulas. {\em Ann. Pure Appl. Logic}, 128(1-3):215--225,
2004.

\bibitem[KMW04]{KuhnMW04}
Fabian Kuhn, Thomas Moscibroda, and Roger Wattenhofer. What cannot
be computed locally! In {\em ACM PODC}, 2004.

\bibitem[KMW06]{KuhnMW06}
Fabian Kuhn, Thomas Moscibroda, and Roger Wattenhofer. The price of
being near-sighted. In {\em Seventeenth ACM-SIAM SODA}, 2006.

\bibitem[Lib97]{Libkin97}
Leonid Libkin. On the forms of locality over finite models. In {\em
LICS}, pages 204--215, 1997.

\bibitem[Lin92]{Linial92}
Nathan Linial. Locality in distributed graph algorithms. {\em SIAM
J. Comput.}, 21(1):193--201, 1992.

\bibitem[NS95]{NaorS95}
Moni Naor and Larry~J. Stockmeyer. What can be computed locally?
{\em SIAM J. Comput.}, 24(6):1259--1277, 1995.

\bibitem[Ott96]{Otto96}
Martin Otto. The expressive power of fixed-point logic with
counting. {\em J. Symb. Log.}, 61(1):147--176, 1996.

\bibitem[Pel00]{Peleg00}
David Peleg. {\em Distributed computing: a locality-sensitive
approach}. Society for Industrial and Applied Mathematics,
Philadelphia, PA, USA, 2000.

\bibitem[See95]{Seese95}
Detlef Seese. Linear time computable problems and logical
descriptions. {\em Electr. Notes Theor. Comput. Sci.}, 2, 1995.

\bibitem[TW68]{TW68}
J.W. Thatcher and J.B. Wright. Generalized finite automata theory
with an application to a decision problem of second-order logic.
{\em Math. Systems Theory}, 2(1):57--81, 1968.

\end{thebibliography}

\newpage

\normalsize

\begin{appendix}


\section{Distributed Evaluation of FO over planar networks: Phase II}\label{appendix-FO-planar-phase-ii}

The purpose of Phase II is to compute $C(v):=\{i \ge 0 | v \in
G[i,i+2r]\}$ for each $v \in V$.

A pre-computed breadth-first-search (BFS) tree rooted on the
requesting node is distributively stored in the network such that
each node $v$ stores the identifier of its parent in the BFS-tree
($parent(v)$), and the states of the ports with respect to the
BFS-tree ($state(l)$ for each port $l$), which are either
``parent'', or ``child'', or ``horizon'', or ``downward'', or
``upward''. Moreover, we suppose that each node $v$ stores in
$depth(v)$ its depth in the BFS tree, i.e. the distance between $v$
and the requesting node.

The distributed algorithm is presented by describing the message
processing at each node $v$.

\begin{table}[ht]
  \centering
  \small
  \begin{tabular*}{\textwidth}{l@{\extracolsep{\fill}}}
    \hline
    Initialization\\
    \hline
    The requesting node sets $treeDepth:=0$.\\
    The requesting node sends message TREEDEPTH over all its ports with state ``child''.\\
    \hline
    \\
    \hline
    Message TREEDEPTH over port $l$\\
    \hline
    $treeDepth:=depth(v)$.\\
    \textbf{if} $v$ is not a leaf \textbf{then}\\
    \hspace*{1em} $v$ sends message TREEDEPTH over all ports with state ``child''.\\
    \textbf{else} $v$ sends message ACKTREEDEPTH($treeDepth$) over the port $l^\prime$ with state ``parent''.\\
    \textbf{end if}\\
    \hline
    \\
    \hline
    Message ACKTREEDEPTH($sd$) over port $l$\\
    \hline
    $treeDepth:=\max\{treeDepth,sd\}$.\\
    \textbf{if} $v$ has received ACKTREEDEPTH messages over all its ports with state ``child'' \textbf{then} \\
    \hspace*{1em} \textbf{if} $v$ is the requesting node \textbf{then}\\
    \hspace*{1em}\hspace*{1em} $v$ sends message STARTCOVER($treeDepth$) over all ports with state ``child''.\\
    \hspace*{1em} \textbf{else} $v$ sends message ACKTREEDEPTH($treeDepth$) over the port $l^\prime$ with state ``parent''.\\
    \hspace*{1em} \textbf{end if}\\
    \textbf{end if}\\
    \hline
    \\
    \hline
    Message STARTCOVER($td$) over port $l$\\
    \hline
    $treeDepth:=td$.\\
    \textbf{if} $treeDepth \le 2r$ \textbf{then}\\
    \hspace*{1em} $C(v):=\{0\}$.\\
    \textbf{else}\\
    \hspace*{1em} $C(v):=\left\{i \in \mathds{N} | \max\{depth(v)-2r,0\} \le i \le
    \min\{depth(v),treeDepth-2r\}\right\}$.\\
    \textbf{end if}\\
    \textbf{if} $v$ is not a leaf \textbf{then}\\
    \hspace*{1em} $v$ sends message STARTCOVER($treeDepth$) over all ports with state ``child''.\\
    \textbf{else} $v$ sends message ACKCOVER over the port $l^\prime$ with state ``parent''.\\
    \textbf{end if}\\
    \hline
    \\
    \hline
    Message ACKCOVER over port $l$\\
    \hline
    \textbf{if} $v$ has received message ACKCOVER over all its ports with state ``child'' \textbf{then}\\
    \hspace*{1em} \textbf{if} $v$ is not the requesting node \textbf{then}\\
    \hspace*{1em}\hspace*{1em} $v$ sends message ACKCOVER over the port $l^\prime$ with state ``parent''.\\
    \hspace*{1em} \textbf{end if}\\
    \textbf{end if}\\
    \hline
  \end{tabular*}
\end{table}

\clearpage

\section{Distributed Evaluation of FO over planar networks: Phase III}\label{appendix-FO-planar-phase-iii}

The purpose of Phase III is to compute $D(v):=\{i \ge 0 | N_r(v)
\subseteq G[i,i+2r]\}$ for each $v \in V$.

When the requesting node receives message ACKCOVER from all its
children, it knows that the computation of Phase II is over. Then it
can starts the computation of Phase III.

We first introduce a lemma.

For $W\subseteq V$, let $K_i(W):=\{v \in W | N_i(V) \subseteq W\}$.
Let $D_i(v):=\{j \ge 0 | v \in K_i[G[j,j+2r]]\}$.

\medskip

\noindent \textbf{Lemma \ref{lem-kernel}}. \textit{For each $v \in
V$ and $i
> 0$, $D_i(v)=C(v) \cap \bigcap \limits_{w: (v,w) \in E}
D_{i-1}(w)$.}

\begin{proof}
\begin{eqnarray*}
  j \in D_i(v) & \Leftrightarrow & v \in K_i(G[j,j+2r]) \Leftrightarrow N_i(v) \subseteq G[j,j+2r]\\
  & \Leftrightarrow & v \in G[j,j+2r] \ and \  \forall w \left( (v,w) \in E \rightarrow N_{i-1}(w) \subseteq G[j,j+2r]\right)\\
  & \Leftrightarrow & j \in C(v) \ and\  \forall w \left((v,w) \in E \rightarrow w \in K_{i-1}(G[j,j+2r])\right)\\
  & \Leftrightarrow & j \in C(v) \ and\  \forall w \left((v,w) \in E \rightarrow j \in D_{i-1}(w)\right)\\
  & \Leftrightarrow & j \in C(v) \cap \bigcap \limits_{w: (v,w) \in E} D_{i-1}(w)\\
\end{eqnarray*}
\end{proof}

With Lemma~\ref{lem-kernel}, $D(v)$'s can be computed in an
inductive way: Each node $v$ obtains the information $D_{i-1}(w)$
from all its the neighbors $w$, and does the in-node computation.

The distributed algorithm is given by describing the message
processing at each node $v$.
\\

\begin{table}[ht]
  \centering
  \small
  \begin{tabular*}{\textwidth}{l@{\extracolsep{\fill}}}
    \hline
    Initialization \\
    \hline
    The requesting node sends message INIT over all ports with state ``child''.\\
    \hline
    \\
    \hline
    Message INIT over port $l$\\
    \hline
    $idx(v):=1$, $D(v):= C(v)$. \\
    \textsf{\% $idx(v)$ is the index $i$ such that $D_i(v)$ is to be computed next.}\\
    $neigborKernel(v):=\emptyset$.\\
    $bKernelOver(v):=false$.\\
    \textbf{if} $v$ is not a leaf \textbf{then}\\
    \hspace*{1em} $v$ sends message INIT over all ports with state ``child''.\\
    \textbf{else} \\
    \hspace*{1em} $v$ sends message ACKINIT over the port $l^\prime$ with state ``parent''.\\
    \textbf{end if}\\
    \hline
    \\
    \hline
    Message ACKINIT over port $l$\\
    \hline
    \textbf{if} $v$ has received ACKINIT messages over all ports with state ``child'' \textbf{then}\\
    \hspace*{1em} \textbf{if} $v$ is not the requesting node \textbf{then}\\
    \hspace*{1em}\hspace*{1em} $v$ sends message ACKINIT over the port $l^\prime$ with state ``parent''.\\
    \hspace*{1em} \textbf{else} \\
    \hspace*{1em}\hspace*{1em} $v$ sends message STARTKERNEL over all ports with state ``child''.\\
    \hspace*{1em} \textbf{end if}\\
    \textbf{end if}\\
    \hline
  \end{tabular*}
\end{table}

\bigskip

\begin{table}[ht]
  \centering
  \small
  \begin{tabular*}{\textwidth}{l@{\extracolsep{\fill}}}
    \hline
    Message STARTKERNEL over port $l$\\
    \hline
    $v$ sends message KERNEL($idx(v)-1$,$D(v)$) over all ports.\\
    \textbf{if} $v$ is not the leaf \textbf{then}\\
    \hspace*{1em} $v$ sends message STARTKERNEL over all ports with state ``child''.\\
    \textbf{end if}\\
    \hline
    \\
    \hline
    Message KERNEL($i$,$ND$) over port $l$\\
    \hline
    Let $neigborKernel(v):= neigborKernel(v) \cup \{(l,i,ND)\}$.\\
    \textbf{if} $i=idx(v)-1$ \textbf{then}\\
    \hspace*{1em} \textbf{if} for each port $l^\prime$, there is a tuple $(l^\prime,idx(v)-1,DD)\in neighborKernel(v)$ for some $DD$ \textbf{then}\\
    \hspace*{1em}\hspace*{1em} \textbf{for each} $(l^\prime,idx(v)-1,DD) \in neigborKernel(v)$ \textbf{do}\\
    \hspace*{1em}\hspace*{1em}\hspace*{1em} $D(v):=D(v) \cap DD$. \\
    \hspace*{1em}\hspace*{1em}\hspace*{1em} $neigborKernel(v):=neigborKernel(v) \backslash
    \{(l^\prime,idx(v)-1,DD)\}$.\\
    \hspace*{1em}\hspace*{1em} \textbf{end for}\\
    \hspace*{1em}\hspace*{1em} $idx(v):=idx(v)+1$.\\
    \hspace*{1em}\hspace*{1em} \textbf{if} $idx(v) \le r$ \textbf{then}\\
    \hspace*{1em}\hspace*{1em}\hspace*{1em} $v$ sends message KERNEL($idx(v)-1$,$D(v)$) to all its neighbors.\\
    \hspace*{1em}\hspace*{1em} \textbf{else}\\
    \hspace*{1em}\hspace*{1em}\hspace*{1em} $bKernelOver(v):=true$. \\
    \hspace*{1em}\hspace*{1em}\hspace*{1em} \textbf{if} $v$ is a leaf or $v$ has
    received message KERNELOVER over all ports with state ``child''
    \textbf{then}\\
    \hspace*{1em}\hspace*{1em}\hspace*{1em}\hspace*{1em} $v$ sends message KERNELOVER over the port $l^\prime$ with state ``parent''. \\
    \hspace*{1em}\hspace*{1em}\hspace*{1em} \textbf{end if}\\
    \hspace*{1em}\hspace*{1em} \textbf{end if}\\
    \hspace*{1em} \textbf{end if}\\
    \textbf{end if}\\
    \hline
    \\
    \hline
    Message KERNELOVER over port $l$\\
    \hline
    \textbf{if} $bKernelOver(v)=true$ and $v$ has received KERNELOVER over all ports with state ``child'' \textbf{then}\\
    \hspace*{1em} \textbf{if} $v$ is not the requesting node \textbf{then}\\
    \hspace*{1em}\hspace*{1em} $v$ sends message KERNELOVER over the port with state ``parent''.\\
    \hspace*{1em} \textbf{end if}\\
    \textbf{end if}\\
    \hline
  \end{tabular*}
\end{table}

\noindent The following proposition can be proved on the $idx(v)$'s
in the above distributed algorithm.

\begin{proposition}\label{prop-kernel-idx}
During the computation of Phase III, for each node $v,w$ such that
$(v,w) \in E$, $|idx(v)-idx(w)| \le 1$.
\end{proposition}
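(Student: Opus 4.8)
The plan is to establish the bound as a global invariant that is preserved throughout Phase~III, exploiting that each counter $idx(v)$ is monotone and that its advancement is causally tied to the receipt of a KERNEL message from \emph{every} neighbor.

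First I would record two structural facts read off directly from the message handlers. (a) The value $idx(v)$ never decreases: apart from the initialization $idx(v):=1$ in the INIT handler, the only assignment is the increment $idx(v):=idx(v)+1$ inside the KERNEL handler. (b) A node $v$ emits the message KERNEL$(j,\cdot)$ exactly when its counter attains the value $j+1$: the message KERNEL$(idx(v)-1,D(v))$ is sent in the STARTKERNEL handler (when $idx(v)=1$, i.e. KERNEL$(0,\cdot)$) and again immediately after each increment for which $idx(v)\le r$. Hence the statement ``$v$ has sent KERNEL$(j,\cdot)$'' is equivalent to ``$idx(v)$ has reached $j+1$''.

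The heart of the matter is the advancement rule. Node $v$ raises $idx(v)$ from $k$ to $k+1$ only inside the KERNEL handler, and only after it has buffered in $neigborKernel(v)$ a tuple $(l',k-1,\cdot)$ for every port $l'$ --- that is, only after it has received KERNEL$(k-1,\cdot)$ from all of its neighbors (out-of-order arrivals are harmless, since messages are buffered and the guard $i=idx(v)-1$ forces them to be consumed in the right order). By fact~(b), a neighbor $w$ sends KERNEL$(k-1,\cdot)$ precisely when $idx(w)$ reaches $k$; combining this with monotonicity~(a), the implication ``$idx(v)\ge k+1 \Rightarrow idx(w)\ge k$ for every neighbor $w$'' holds at every instant. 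Setting $k=idx(v)-1$ gives $idx(w)\ge idx(v)-1$, i.e. $idx(v)-idx(w)\le 1$ for each edge $(v,w)\in E$; the case $idx(v)=1$ is trivial since every node has $idx\ge 1$ once the rounds have begun.

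Finally I would close by symmetry and a boundary check. Swapping $v$ and $w$ in the implication yields $idx(w)-idx(v)\le 1$, so together $|idx(v)-idx(w)|\le 1$. The KERNEL rounds begin only after the INIT/ACKINIT sweep has set $idx(v)=1$ at every node, so the invariant holds with equality at the start and is then preserved by each increment via the implication above; at the other extreme the counters stabilise at $r+1$ (a node with $idx(v)=r+1$ no longer emits KERNEL messages, so no neighbor can overtake it), and the inequality persists. The only delicate point is the bookkeeping linking each value of $idx$ to the KERNEL message it triggers and to the messages that enable the next increment; once fact~(b) and the all-ports advancement condition are pinned down, the bound is immediate.
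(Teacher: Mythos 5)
Your proof is correct and takes essentially the same route as the paper's: both rest on the observation that $idx(v)$ can reach $k+1$ only after $v$ has received a KERNEL$(k-1,\cdot)$ message over every port, and that a neighbor $w$ emits KERNEL$(k-1,\cdot)$ only once its own counter has reached $k$, which together with monotonicity of the counters gives $idx(w) \ge idx(v)-1$; the paper merely packages this as a one-step proof by contradiction, whereas you state it as a preserved invariant. Your extra bookkeeping (explicit monotonicity, the equivalence between ``$v$ has sent KERNEL$(j,\cdot)$'' and ``$idx(v)$ has reached $j+1$'', and the boundary cases at initialization and at $idx=r+1$) just makes explicit steps the paper leaves implicit.
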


\begin{proof}
To the contrary, suppose that $idx(v)-idx(w) > 1$ for some $v,w:
(v,w) \in E$.

From the distributed algorithm, we know that $v$ has completed the
computation of $D_{idx(v)-1}(v)$, so it has received messages
$KERNEL(idx(v)-2,DD)$ over all its ports. In particular, $v$ has
received message $KERNEL(idx(v)-2,DD)$ over the port $l^\prime$ such
that $v$ is connected to $w$ through $l^\prime$. But then, we have
$idx(w)-1 \ge idx(v)-2$, i.e. $idx(v)-idx(w) \le 1$, a
contradiction.
\end{proof}

During the computation of Phase III, for each link $(v,w) \in E$,
the number of ``KERNEL'' messages sent over $(v,w)$ is no more than
$2r$. Therefore, during the distributed computation of Phase III,
only $O(1)$ messages are sent over each link.


\section{Distributed Evaluation of FO over planar networks: Phase IV}\label{appendix-FO-planar-phase-iv}

The purpose of Phase IV is to compute $P_i:=\{ v \in V | i \in D(v),
\langle G[i,i+2r] \rangle^G \models \psi^{(r)}(v)\}$ for each $i \ge
0$.

Because our distributed algorithm for Phase IV is obtained by
transforming the centralized evaluation algorithm for MSO formulae
over classes of graphs with bounded tree-width, we first recall it
in the following.

\subsection{Centralized evaluation of MSO formulae over classes of graphs with bounded tree-width}

We first recall the centralized linear time evaluation of MSO
sentences.

Let $\Sigma$ be some alphabet. A \emph{tree language} over alphabet
$\Sigma$ is a set of rooted $\Sigma$-labeled binary trees. Let
$\varphi$ be an MSO sentence over the vocabulary $\{E_1,E_2\}\cup
\{P_c | c \in \Sigma\}$, ($E_1,E_2$ are respectively the left and
right children relations of the tree),
 the tree language accepted by
$\varphi$, $\mathcal{L}(\varphi)$, is the set of rooted
$\Sigma$-labeled trees satisfying $\varphi$.

Tree languages can also be recognized by tree automata. A
\emph{deterministic bottom-up tree automaton} $\mathcal{A}$ is a
quintuple $(Q,\Sigma, \delta, f_0,F)$, where $Q$ is the set of
states; $F \subseteq Q$ is the set of final states; $\Sigma$ is the
alphabet; and
\vspace*{-2mm}
\begin{itemize}
\item $\delta: (Q \cup Q \times Q) \times \Sigma \rightarrow Q$ is the transition function; and

\item $f_0: \Sigma \rightarrow Q$ is the initial-state assignment function.
\end{itemize}
\vspace*{-2mm}
A \emph{run} of tree automaton $\mathcal{A}\!=\!(Q,\Sigma,
\delta,f_0,F)$ over a rooted $\Sigma$-labeled binary tree
$\mathcal{T}\!=\!(T,F,r,L)$ produces a rooted $Q$-labeled tree
$\mathcal{T}^\prime\!=\!(T,F,r,L^\prime)$ such that
\vspace*{-2mm}
\begin{itemize}
\item If $t \in T$ is a leaf, then $L^\prime(t)=f_0(t)$;

\item Otherwise, if  $t \in T$ has one child $t^\prime$,
then $L^\prime(t)=\delta(L^\prime(t^\prime),L(t))$;

\item Otherwise, if $t \in T$ has two children $t_1,t_2$,
then $L^\prime(t)=\delta(L^\prime(t_1),L^\prime(t_2),L(t))$.
\end{itemize}
\vspace*{-2mm}
Note that for each deterministic bottom-up automaton $\mathcal{A}$
and rooted $\Sigma$-labeled binary tree $T$, there is exactly one
run of $\mathcal{A}$ over $T$.

The \emph{run} $\mathcal{T}^\prime=(T,F,r,L^\prime)$ of
$\mathcal{A}=(Q,\Sigma, \delta,f_0,F)$ over a rooted
$\Sigma$-labeled binary tree $\mathcal{T}=(T,F,r,L)$ is
\emph{accepting} if $L^\prime(r) \in F$.

A rooted $\Sigma$-labeled binary tree $\mathcal{T}=(T,F,r,L)$ is
{\it accepted} by a tree automaton $\mathcal{A}=(Q,\Sigma,
\delta,f_0,F)$ if the run of $\mathcal{A}$ over $\mathcal{T}$ is
accepting. The tree language {\it accepted} by $\mathcal{A}$,
$\mathcal{L}(\mathcal{A})$, is the set of rooted $\Sigma$-labeled
binary trees accepted by $\mathcal{A}$.

The next theorem shows that the two notions are equivalent.

\begin{theorem}\label{thm-MSO-automata}
\cite{TW68} Let $\Sigma$ be a finite alphabet. A tree language over
$\Sigma$ is accepted by a tree automaton iff it is defined by an MSO
sentence. Moreover, there are algorithms to construct an equivalent
tree automaton from a given MSO sentence and to construct an
equivalent MSO sentence from a given automaton.
\end{theorem}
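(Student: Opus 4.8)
The plan is to prove the two implications separately and to note that every construction involved is effective, which yields the concluding algorithmic claim. I would treat the easy direction, from automata to MSO, first. Given a deterministic bottom-up tree automaton $\mathcal{A}=(Q,\Sigma,\delta,f_0,F)$, its unique run labels each node by exactly one state, so I would introduce a second-order variable $X_q$ for every $q\in Q$ meant to collect the nodes receiving state $q$. The required MSO sentence existentially quantifies the $X_q$ and asserts, as a first-order condition over $E_1,E_2$ and the label predicates $P_c$, that: the family $\{X_q\}_{q\in Q}$ partitions the node set; every leaf carrying label $c$ lies in $X_{f_0(c)}$; every node with a single child, respectively with two children, carrying label $c$ receives exactly the state dictated by $\delta$ from the state(s) of its child(ren) and $c$; and the root (the unique node with no $E_1/E_2$-parent) lies in some $X_q$ with $q\in F$. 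Because the run is unique, this sentence defines exactly $\mathcal{L}(\mathcal{A})$, and it is built algorithmically from $\mathcal{A}$.

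For the converse, from a given MSO sentence to an automaton, I would argue by induction on formula structure, using the closure properties of recognizable tree languages. First I would normalise so that only set variables remain, encoding first-order variables as singletons; an assignment to the free set variables $X_1,\dots,X_m$ is recorded by passing to the expanded alphabet $\Sigma\times\{0,1\}^m$, one bit per variable per node. The base cases --- the incidence atoms $E_1(X,Y)$ and $E_2(X,Y)$, the label tests $P_c(X)$, and the auxiliary predicate asserting that $X$ is a singleton --- are each recognised by a small, explicitly describable automaton. In the inductive step, conjunction and disjunction are handled by intersection and union (product constructions), negation by complementation, and an existential set quantifier $\exists X$ by projecting the expanded alphabet, i.e.\ erasing the bit associated with $X$ from every transition. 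Each operation preserves recognizability and is effective, so unwinding the induction produces an automaton equivalent to the original sentence.

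The hard part is the closure under complementation demanded by negation. Projection is naturally a nondeterministic operation, so after handling $\exists X$ one is left with a nondeterministic bottom-up tree automaton, whereas complementation is immediate only for \emph{deterministic} automata, where it suffices to complement the set $F$ of final states. I would therefore insert a \emph{determinization} step, the subset construction for bottom-up tree automata, before each complementation: the states of the determinised automaton are subsets of $Q$, the transition function is induced on subsets by $\delta$, and the final states are those subsets meeting $F$. The delicate points are verifying that this subset automaton accepts the same trees as the nondeterministic one, despite the exponential blow-up in the state set, and that complementation then reduces to taking the complementary collection of accepting subsets; by contrast, the product constructions for conjunction and disjunction and the relabelling used for projection are routine. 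Since determinization, product, complementation and projection are all effective, the final automaton is deterministic as required by the definition, and both translations are algorithmic, establishing the theorem.
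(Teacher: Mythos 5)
Your proposal is correct, but note that the paper itself gives no proof of this statement: it is the classical Thatcher--Wright theorem, imported by citation to \cite{TW68} and used purely as a black box in the centralized MSO evaluation algorithm. What you have written is the standard textbook argument underlying that citation --- encoding the (unique) run of a deterministic bottom-up automaton by existentially quantified state sets $X_q$ for the automaton-to-logic direction, and, for the converse, structural induction over an expanded alphabet $\Sigma\times\{0,1\}^m$ with product constructions for the Boolean connectives, projection for $\exists X$, and the subset-construction determinization of bottom-up tree automata inserted before each complementation --- and each step you describe is sound, including the observation that determinization is what makes complementation (flipping the final-state set of a complete deterministic automaton) legitimate, and that a final determinization yields an automaton of the deterministic form the paper's definition requires. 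So there is nothing in the paper to diverge from: your proof supplies, correctly, exactly the classical argument the paper delegates to the literature.
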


The centralized linear time algorithm to evaluate an MSO sentence
$\varphi$ over a graph $G=(V,E)$ with tree-width bounded by $k$
works as follows:
\vspace*{-2mm}
\begin{description}
\item[Step 1] Construct an ordered tree decomposition $\mathcal{T}=(T,F,r,L)$ of $G$ of width $k$ and rank $\le 2$;

\item[Step 2] Transform $\mathcal{T}$ into a $\Sigma_k$-labeled binary tree $\mathcal{T}^\prime=(T,F,r,\lambda)$ for some finite alphabet $\Sigma_k$;

\item[Step 3] Construct an MSO sentence $\varphi^\ast$ over vocabulary
$\{E_1,E_2\} \cup \{P_c| c \in \Sigma_k\}$ from $\varphi$ (over
vocabulary $\{E\}$) such that $G \models \varphi$ iff
$\mathcal{T}^\prime \models \varphi^\ast$;

\item[Step 4] From $\varphi^\ast$, construct a bottom-up binary tree automaton $\mathcal{A}$,
and run $\mathcal{A}$ over $\mathcal{T}^\prime$ to decide whether
$\mathcal{T}^\prime$ is accepted by $\mathcal{A}$.
\end{description}
\vspace*{-2mm}
For Step 1, it has been shown that a tree decomposition of graphs
with bounded tree-width can be constructed in linear time
\cite{Bodlaender93}. It follows from Theorem~\ref{thm-MSO-automata}
that Step 4 is feasible. Now suppose that an ordered tree
decomposition $\mathcal{T}=(T,F,r,L)$ of $G=(V,E)$ of width $k$ and
rank $\le 2$ has been constructed, we recall how to perform Step 2
and Step 3 in linear time.

For Step 2, a rooted $\Sigma_{k}$-labeled tree
$\mathcal{T}^\prime=(T,F,r,\lambda)$, where $\Sigma_k=2^{[k+1]^2}
\times 2^{[k+1]^2} \times 2^{[k+1]^2}$ ($[k+1]=\{1,2,\cdots,k+1\}$),
can be obtained from $\mathcal{T}$ as follows: The new labeling
$\lambda$ over $(T,F)$ is defined by
$\lambda(t)=(\lambda_1(t),\lambda_2(t),\lambda_3(t))$, where
\vspace*{-2mm}
\begin{itemize}
\item $\lambda_1(t):=\{(j_1,j_2)\in[k+1]^2 | (b^t_{j_1},b^t_{j_2}) \in
E\}$.

\item $\lambda_2(t):=\{(j_1,j_2)\in[k+1]^2 | b^t_{j_1}=b^t_{j_2}\}$.

\item $\lambda_3(t):=
\left\{\begin{array}{cc} \{(j_1,j_2)\in[k+1]^2|
b^t_{j_1}=b^{t^\prime}_{j_2}\} &
    \mbox{for the parent } t^\prime \mbox{ of } t, \mbox{ if } t \ne r\\
\emptyset & \mbox{if } t = r
\end{array} \right.$
\end{itemize}
\vspace*{-2mm}
For Step 3, we recall how to translate the MSO sentence $\varphi$
over the vocabulary $\{E\}$ into an MSO sentence $\varphi^\ast$ over
the vocabulary $\{E_1,E_2\} \cup \{P_c| c \in \Sigma_k\}$ such that
$G \models \varphi$ iff $\mathcal{T}^\prime \models \varphi^\ast$.
The translation relies on the observation that elements  and subsets
of $V$ can be represented by  $(k+1)$-tuples of subsets of $T$. For
each element $v \in V$ and $i \in [k+1]$, let
\[
U_i(v):=\left\{\begin{array}{cl}
            \{t(v)\} & \mbox{, if }b^{t(v)}_i=v\mbox{, and }b^{t(v)}_j \ne v \mbox{ for all }j: 1 \le j < i \\
            \emptyset & \mbox{, otherwise}
          \end{array}\right.
\]
where $t(v)$ is the minimal $t \in T$ (with respect to the partial
order $\le^{\mathcal{T}}$) such that $v \in
\{b^{t}_1,\cdots,b^{t}_{k+1}\}$. Let
$\overline{U}(v)=(U_1(v),\cdots,U_{k+1}(v))$.

For each $S \subseteq V$ and $i \in [k+1]$, let $U_i(S):=\cup_{v \in
S} U_i(v)$, and let $\overline{U}(S)=(U_1(S),\cdots,U_{k+1}(S))$. It
is not hard to see that for subsets $U_1,\cdots,U_{k+1} \subseteq
T$, there exists $v \in V$ such that $\overline{U}=\overline{U}(v)$
iff
\vspace*{-2mm}
\begin{itemize}
\item (1) $\bigcup^{k+1}_{i=1} U_i$ is a singleton;

\item (2) For all $t \in T$, $i<j<k+1$, if $t \in U_j$, then $(i,j) \not \in \lambda_2(t)$;

\item (3) For all $t \in T$, $i,j<k+1$, if $t \in U_i$, then $(i,j) \not \in
\lambda_3(t)$.
\end{itemize}
\vspace*{-2mm}
\noindent Moreover, there is a subset $S \subseteq V$ such that
$\overline{U}=\overline{U}(S)$ iff conditions (2) and (3) are
satisfied. Using the above characterizations of $\overline{U}(v)$
and $\overline{U}(S)$, it is easy to construct MSO formulae
$Elem(X_1,\cdots,X_{k+1})$ and $Set(X_1,\cdots,X_{k+1})$ over
$\{E_1,E_2\} \cup \{P_c| c \in \Sigma_k\}$ such that

$\mathcal{T}^\prime \models Elem(\overline{U}) \mbox{ iff there is a
} v \in V \mbox{ such that } \overline{U}=\overline{U}(v).$

$\mathcal{T}^\prime \models Set(\overline{U}) \mbox{ iff there is a
} S \subseteq V \mbox{ such that } \overline{U}=\overline{U}(S).$

\begin{lemma}\label{lem-MSO-translation}
\cite{FFG02} Every MSO formula
$\varphi(X_1,\cdots,X_l,y_1,\cdots,y_m)$ over vocabulary $E$ can be
effectively translated into a formula
$\varphi^\ast(\overline{X}_1,\cdots,\overline{X}_l,\overline{Y}_1,\cdots,\overline{Y}_m)$
over the vocabulary $\{E_1,E_2\} \cup \{P_c | c \in \Sigma_k\}$ such
that
%
%
\begin{description}
\item (1) For all $S_1,\cdots,S_l \subseteq V$, and $v_1,\cdots,v_m \in
V$,\\
$G \models \varphi(S_1,\cdots,S_l,v_1,\cdots,v_m)$ iff
$\mathcal{T}^\prime \models
\varphi^\ast(\overline{U}(S_1),\cdots,\overline{U}(S_l),\overline{U}(v_1),\cdots,\overline{U}(v_m))$.

\item (2) For all $\overline{U}_1,\cdots,\overline{U}_l, \overline{W}_1,\cdots,\overline{W}_m \subseteq T$ such
that $\mathcal{T}^\prime \models
\varphi^\ast(\overline{U}_1,\cdots,\overline{U}_l,\overline{W}_1,$
$\cdots,\overline{W}_m)$, there exist $S_1,\cdots,S_l \subseteq V$,
$v_1,\cdots,v_m \in V$ such that $\overline{U}_i=\overline{U}(S_i)$
for all $1\le i \le l$ and $\overline{W}_j=\overline{U}(v_j)$ for
all $1 \le j \le m$.
\end{description}
%
%
\end{lemma}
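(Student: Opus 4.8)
The plan is to define the translation $\varphi \mapsto \varphi^\ast$ by induction on the structure of $\varphi$, replacing every first-sort (element) variable $y$ by a $(k+1)$-tuple $\overline{Y}=(Y_1,\dots,Y_{k+1})$ of set variables over $T$ subject to the constraint $Elem(\overline{Y})$, and every second-sort (set) variable $X$ by a $(k+1)$-tuple $\overline{X}=(X_1,\dots,X_{k+1})$ subject to $Set(\overline{X})$, the intended reading being $\overline{Y}=\overline{U}(v)$ for a vertex $v$ and $\overline{X}=\overline{U}(S)$ for a subset $S$. At the top level I would conjoin $\bigwedge_i Set(\overline{X}_i)\wedge\bigwedge_j Elem(\overline{Y}_j)$ onto $\varphi^\ast$ for the free variables; this is exactly what forces the ``only if'' direction (2), since by the already-established correctness of $Elem$ and $Set$ any tree assignment satisfying $\varphi^\ast$ must take every free element-tuple to be some $\overline{U}(v_j)$ and every free set-tuple to be some $\overline{U}(S_i)$.

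The Boolean and quantifier cases are routine. I would translate $\neg,\wedge,\vee$ homomorphically, translate $\exists y\,\psi$ into $\exists \overline{Y}\,(Elem(\overline{Y})\wedge\psi^\ast)$ and $\exists X\,\psi$ into $\exists \overline{X}\,(Set(\overline{X})\wedge\psi^\ast)$; the correctness of $Elem$ and $Set$ then guarantees that quantifying over encoded tuples in $\mathcal{T}^\prime$ matches quantifying over vertices, resp. subsets, of $G$, so the inductive step carries both directions (1) and (2). The equality and membership atoms are easy once one observes that the canonical encoding is injective and monotone: $v_1=v_2$ iff $\overline{U}(v_1)$ and $\overline{U}(v_2)$ agree componentwise, because the unique nonempty component pins down both the minimal bag $t(v)$ and the minimal position of $v$ in it; and $v\in S$ iff $\overline{U}(v)$ is componentwise contained in $\overline{U}(S)$, because a witness $t(v)\in U_{i_0}(S)$ forces some $w\in S$ to share the minimal bag and minimal position of $v$, hence $w=v$. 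These are expressed over the tree vocabulary by $\bigwedge_i \forall t\,(t\in Y^1_i \leftrightarrow t\in Y^2_i)$ and $\bigwedge_i \forall t\,(t\in Y_i \rightarrow t\in X_i)$.

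The hard part will be the translation of the edge atom $E(y_1,y_2)$, and this is where the bulk of the work lies. The difficulty is that $\overline{U}(v)$ records $v$ only at its minimal bag $t(v)$, whereas an edge $(v_1,v_2)$ need not be witnessed there: by the tree-decomposition axioms the set of bags containing each endpoint is connected, and the edge is witnessed in some common bag $t$ that is a descendant of both $t(v_1)$ and $t(v_2)$. I would first check that the two minimal bags are then comparable, say $t(v_1)$ is a weak ancestor of $t(v_2)$, and that the edge is already witnessed at $t(v_2)$, where $v_2$ occupies its canonical position $i_2$ and $v_1$ some position $j_1$ with $(j_1,i_2)\in\lambda_1(t(v_2))$. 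To recover $j_1$ from the canonical data $(t(v_1),i_1)$ of $v_1$, one must follow $v_1$ down the path from $t(v_1)$ to $t(v_2)$: the labels $\lambda_3$ record how each child's bag positions correspond to its parent's, and $\lambda_2$ records within-bag duplicated positions, so the position of $v_1$ evolves by a local transition along the path.

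I would express this position-tracking in MSO by existentially quantifying a family of position-marking sets along the path and asserting their local consistency with $\lambda_2$ and $\lambda_3$, combined with the edge condition read off from $\lambda_1$ at the witnessing bag; the symmetric case, where $t(v_2)$ is a weak ancestor of $t(v_1)$, is handled dually. Verifying that this formula defines adjacency correctly, using the connectivity of the set of bags containing each endpoint, is the \emph{main obstacle}; once it is in place, items (1) and (2) follow by the structural induction sketched above.
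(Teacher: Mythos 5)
Your reconstruction is correct and follows the same route as the source this lemma is drawn from: the paper itself gives no proof (the lemma is imported from \cite{FFG02}), but the text preceding it supplies exactly the ingredients you use, namely the injective encoding $\overline{U}(\cdot)$ and the formulae $Elem$ and $Set$, whose correctness is what drives your quantifier cases and yields item (2) from the top-level conjuncts. Your handling of the one genuinely nontrivial atom $E(y_1,y_2)$ --- observing that $t(v_1)$ and $t(v_2)$ are comparable, that the edge is witnessed in the deeper of the two bags, and then tracking the position of the shallower vertex along the connecting tree path by existentially quantified marking sets kept consistent with $\lambda_2$ and $\lambda_3$ and checked against $\lambda_1$ --- is precisely the standard argument of \cite{FFG02}, so the proposal is sound and essentially identical in approach.
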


\medskip

\noindent Now we recall the evaluation of MSO formulae containing
free variables over classes of graphs with bounded tree-width
\cite{FFG02}. Let $\varphi(X_1,\cdots,X_l,y_1,\cdots,y_m)$ be an MSO
formula containing free set variables $X_1,\cdots,X_l$ and
first-order variables $y_1,\cdots,y_m$.

Like the evaluation of MSO sentences, the evaluation algorithm also
consists of four steps. The first two steps of the evaluation is the
same as those of the evaluation of MSO sentences. The 3rd step is
also similar, a formula
$\varphi^\ast(\overline{X}_1,\cdots,\overline{X}_l,\overline{Y}_1,\cdots,\overline{Y}_m)$
over the vocabulary $\{E_1,E_2\} \cup \{P_c | c \in \Sigma_k\}$ is
obtained from $\varphi(X_1,\cdots,X_l,y_1,\cdots,y_m)$ (over the
vocabulary $\{E\}$) such that the conditions specified in
Lemma~\ref{lem-MSO-translation} are satisfied. The main difference
is in the 4th step.

Because $\varphi^\ast$ is not a sentence and
Theorem~\ref{thm-MSO-automata} only applies to MSO sentences, we
cannot construct directly a tree automaton from $\varphi^\ast$ and
run the automaton over $\mathcal{T}^\prime$. However, we can replace
the free set variables in
$\varphi^\ast(\overline{X}_1,\cdots,\overline{X}_l,\overline{Y}_1,\cdots,\overline{Y}_m)$
by some appropriate new unary relation names and transform it into a
sentence $\varphi^{\ast\ast}$. Let $\Sigma^\prime_k:=\Sigma_k \times
\{0,1\}^{(k+1)(l+m)}$, then from $\varphi^{\ast\ast}$, an automaton
$\mathcal{A}=(Q,\Sigma^\prime_k, \delta, f_0, F)$ can be constructed
such that for each $\Sigma^\prime_k$-labeled tree
$\mathcal{S}^\prime$, $\mathcal{S}^\prime \models
\varphi^{\ast\ast}$ if and only if $\mathcal{A}$ accepts
$\mathcal{S}^\prime$.

A $\Sigma_k$-labeled tree $\mathcal{S}=(S,H,r,M)$ together with
$\overline{U}_1,\cdots,\overline{U}_l,
\overline{W}_1,\cdots,\overline{W}_m \subseteq S$ leads to a
$\Sigma^\prime_k$-labeled tree $(S,H,r,M^\prime)$, denoted by
$(\mathcal{S};\overline{U}_1,\cdots,\overline{U}_l,
\overline{W}_1,\cdots,\overline{W}_m)$, in a natural way:
$M^\prime(s)=(M(s),\bar{\varepsilon},\bar{\theta})$, where
\[\varepsilon_{(k+1)(i-1)+j}=1  \mbox{ iff } s \in U^j_i \mbox{ for all } 1 \le i \le
l, 1 \le j \le k+1, \mbox{ and}\]
\[\theta_{(k+1)(i-1)+j}=1 \mbox{ iff } s \in
W^j_i \mbox{ for all } 1 \le i \le m, 1 \le j \le k+1.\]

Then given a $\Sigma_k$-labeled tree $\mathcal{S}$, the computation
of the set
\[\varphi^\ast(\mathcal{S}):=\left\{\overline{U}_1,\cdots,\overline{U}_l,
\overline{W}_1,\cdots,\overline{W}_m \subseteq S | \mathcal{S}
\models \varphi^\ast(\overline{U}_1,\cdots,\overline{U}_l,
\overline{W}_1,\cdots,\overline{W}_m)\right\}\] can be reduced to
the computation of the set
\[\mathcal{A}(\mathcal{S}):=\left\{\overline{U}_1,\cdots,\overline{U}_l,
\overline{W}_1,\cdots,\overline{W}_m \subseteq S | \mathcal{A}
\mbox{ accepts } (\mathcal{S};\overline{U}_1,\cdots,\overline{U}_l,
\overline{W}_1,\cdots,\overline{W}_m) \right\}.\]

Now we recall how $\mathcal{S}=(S,H,r,M)$ can be passed by
$\mathcal{A}=(Q,\Sigma^\prime_k,\delta, f_0,F)$ for three times,
first in bottom-up, then top-down, finally bottom-up again, to
compute $\mathcal{A}(\mathcal{S})$.

(1) \emph{Bottom-up}. From leaves to the root, for each $s \in S$,
the set of ``potential states'' of $s$, denoted $Pot_s$, is computed
inductively: If $s$ is a leaf, then
$Pot_s:=\{f_0(M(s),\bar{\varepsilon},\bar{\theta})|
\bar{\varepsilon} \in \{0,1\}^{l(k+1)}, \bar{\theta} \in
\{0,1\}^{m(k+1)}\}$. For an inner vertex $s$ with a child
$s^\prime$,
\[Pot_s:=\{\delta(q^\prime,(M(s),\bar{\varepsilon},\bar{\theta}))|
q^\prime \in Pot_{s^\prime}, \bar{\varepsilon} \in \{0,1\}^{l(k+1)},
\bar{\theta} \in \{0,1\}^{m(k+1)}\}.\]

For an inner vertex $s$ with two children $s_1$ and $s_2$,
\[Pot_s:=\{\delta(q_1,q_2,(M(s),\bar{\varepsilon},\bar{\theta}))| q_1 \in
Pot_{s_1}, q_2 \in Pot_{s_2}, \bar{\varepsilon} \in
\{0,1\}^{l(k+1)}, \bar{\theta} \in \{0,1\}^{m(k+1)}\}.\]

(2) \emph{Top-down}. Starting from the root $r$, for each $s \in S$,
the set of ``successful states'' of $s$, denoted $Suc_s$, is
computed: let $Suc_r:=F \cap Pot_r$, and for $s \in S$ with parent
$t$ and no sibling,
\[Suc_s:=\{q \in Pot_s | \exists \bar{\varepsilon},\bar{\theta}, \mbox{ such that } \delta(q,(M(t),\bar{\varepsilon},\bar{\theta})) \in Suc_{t}\}.\]

For $s \in S$ with parent $t$ and a sibling $s^\prime$,
\[Suc_s:=\{q \in Pot_s | \exists q^\prime \in Pot_{s^\prime}, \bar{\varepsilon},\bar{\theta}, \mbox{ such that }\delta(q,q^\prime, (M(t),\bar{\varepsilon},\bar{\theta})) \in Suc_{t}\}.\]

(3) \emph{Bottom-up again}. For $s \in S$, let $\mathcal{S}_s$
denote the subtree of $\mathcal{S}$ with $s$ as the root. Starting
from the leaves, for each $s \in S$ and $q \in Suc_s$, compute
$Sat_{s,q}$. Intuitively, a tuple $\bar{B},\bar{C} \subseteq S_s$ is
in $Sat_{s,q}$ if it is the restriction of a ``satisfying
assignment'' $\overline{B^\prime},\overline{C^\prime} \in
\mathcal{A}(\mathcal{S})$ to $S_s$, and for the run of $\mathcal{A}$
over $(\mathcal{S};\overline{B^\prime},\overline{C^\prime})$, the
state of the run at $s$ is $q$.

Let $s \in S$ and $q \in Suc_s$. Set $B^s_1:=\{s\}$ and
$B^s_0:=\emptyset$.

If $s$ is a leaf, then
\[Sat_{s,q}:=\left\{\left.(B_{\varepsilon_{1}},\cdots,B_{\varepsilon_{l(k+1)}},C_{\theta_{1}},\cdots,C_{\theta_{m(k+1)}})\right| q=f_0(M(s),\bar{\varepsilon},\bar{\theta})\right\}.\]

If $s$ is an inner vertex with one child $s^\prime$, then
\[
Sat_{s,q}:=\left\{\begin{array}{l} \left.(B^\prime_1 \cup
B_{\varepsilon_{1}},\cdots,B^\prime_{l(k+1)} \cup
B_{\varepsilon_{l(k+1)}},C^\prime_1 \cup
C_{\theta_{1}},\cdots,C^\prime_{m(k+1)} \cup C_{\theta_{m(k+1)}})
\right| \\
\exists q^\prime \in Suc_{s^\prime} \mbox{ such that, }
q=\delta(q^\prime,(M(s),\bar{\varepsilon},\bar{\theta})),
(\overline{B^\prime},\overline{C^\prime}) \in
Sat_{s^\prime,q^\prime}.
\end{array}\right\}.\]

If $s$ is an inner vertex with two children $s_1$ and $s_2$, then
\[
Sat_{s,q}:=\left\{\begin{array}{l} \left.\left(\begin{array}{l}
B^\prime_1 \cup B^{\prime\prime}_1 \cup
B_{\varepsilon_{1}},\cdots,B^\prime_{l(k+1)} \cup
B^{\prime\prime}_{l(k+1)} \cup B_{\varepsilon_{l(k+1)}},\\
C^\prime_1 \cup C^{\prime\prime}_1 \cup
C_{\theta_{1}},\cdots,C^\prime_{m(k+1)} \cup
C^{\prime\prime}_{m(k+1)} \cup C_{\theta_{m(k+1)}}
\end{array}\right) \right| \\
\begin{array}{l}
\exists q_1 \in Suc_{s_1}, q_2 \in Suc_{s_2}\mbox{ such that, } q=\delta(q_1,q_2,(M(s),\bar{\varepsilon},\bar{\theta})),\\
(\overline{B^\prime},\overline{C^\prime}) \in Sat_{s_1,q_1},
(\overline{B^{\prime\prime}},\overline{C^{\prime\prime}}) \in
Sat_{s_2,q_2}.
\end{array}\end{array}\right\}.\]

Then $\mathcal{A}(\mathcal{S})=\bigcup_{q \in Suc_r} Sat_{r,q}$.

Therefore, we can run $\mathcal{A}$ over $\mathcal{T}^\prime$ for
three times to compute $\mathcal{A}(\mathcal{T}^\prime)$. Finally
from $\mathcal{A}(\mathcal{T}^\prime)$, we can construct
$\varphi(G)=\left\{\left.(S_1,\cdots,S_l,v_1,\cdots,v_m) \right| G
\models \varphi(S_1,\cdots,S_l,v_1,\cdots,v_m)\right\}$ according to
the mechanism to encode the elements and sets of $V$ into the
subsets of $\mathcal{T}^\prime$.

\subsection{Distributed evaluation of $\psi^{(r)}(x)$ over $G[i,i+2r]$'s}

Now we consider the distributed evaluation of $\psi^{(r)}(x)$ over
$G[i,i+2r]$'s.

Because $\psi^{(r)}(x)$ is a local formula, it is sufficient to
evaluate $\psi^{(r)}(x)$ over each connected component of
$G[i,i+2r]$.

Let $C_i$ be a connected component of $G[i,i+2r]$, and
$w^i_1,\cdots,w^i_l$ be the nodes contained in $C_i$ with distance
$i$ from the requesting node. Now we consider the evaluation of
$\psi^{(r)}(x)$ over $C_i$.

Let $C^\prime_i$ be the graph obtained from $C_i$ by including all
ancestors of $w^i_1,\cdots,w^i_l$, and $C^\ast_i$ be the graph
obtained from $C^\prime_i$ by contracting all the ancestors of
$w^i_1,\cdots,w^i_l$ into one vertex $v^\ast$, i.e. $C^\ast_i$ has
one more vertex $v^\ast$ than $C_i$, and $v^\ast$ is connected to
$w^i_1,\cdots,w^i_l$. It is easy to see that $C^\ast_i$ is a planar
graph with a BFS tree rooted on $v^\ast$ with depth at most $2r+1$.
Consequently $C^\ast_i$ is a planar graph with bounded diameter,
thus a graph with bounded tree-width. Because $C_i$ is a subgraph of
$C^\ast_i$, $C_i$ is a planar graph with bounded tree-width as well.

Our purpose is to construct distributively an ordered tree
decomposition for $C_i$, and evaluate $\psi^{(r)}(x)$ by using the
automata-theoretic technique.


The distributed construction of an ordered tree decomposition for a
planar network with bounded diameter is as follows \cite{GW09}:
\begin{itemize}
\item Do a depth-first-search to decompose the network into blocks,
i.e. biconnected components;

\item Construct an ordered tree decomposition for each
nontrivial block: Traverse every face of the block according to the
cyclic ordering at each node, triangulate all those faces, and
connect the triangles into a tree decomposition by utilizing the
pre-computed BFS tree;

\item Finally the tree decompositions for the blocks are connected
together into a complete tree decomposition for the whole network.
\end{itemize}

The blocks of $C^\ast_i$ enjoy the following property.

\begin{lemma}\label{lem-FO-planar-T-biconn}
Let
\begin{itemize}
\item $C_i$ be a connected component of $G[i,i+2r]$,

\item $w^i_1,\cdots,w^i_l$ be all the nodes contained in $C_i$ with
distance $i$ from the requesting node,

\item $C^\prime_i$ be the graph obtained from $C_i$ by including all
ancestors of $w^i_1,\cdots,w^i_l$,

\item $C^\ast_i$ be the graph obtained from $C^\prime_i$ by contracting all the ancestors of
$w^i_1,\cdots,w^i_l$ into one vertex.
\end{itemize}

Then the virtual vertex $v^\ast$ and all the $w^i_1,\cdots,w^i_l$
are contained in a unique block $B_0$ of $C^{\ast}_i$, and for each
block $B \ne B_0$, there is a $w^i_j$ such that \[V(B) \subseteq \{u
\in V(C_i) | u \mbox{ is a descendant of } w^i_j \mbox{ in the BFS
tree}\}.\]
\end{lemma}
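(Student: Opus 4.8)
The plan is to analyze the block structure of $C^\ast_i$ by exploiting the BFS-tree organization of $C_i$ together with the structural properties of biconnected components. Recall that $C_i$ is a connected component of $G[i,i+2r]$, so every node of $C_i$ has BFS-depth between $i$ and $i+2r$, and the nodes $w^i_1,\dots,w^i_l$ are exactly the nodes of $C_i$ at the minimal depth $i$. The virtual vertex $v^\ast$ of $C^\ast_i$ is obtained by contracting all ancestors (of depth $<i$) of the $w^i_j$ into a single vertex adjacent to each $w^i_j$. I would first establish that $v^\ast$ together with $w^i_1,\dots,w^i_l$ all lie in one common block. Since each $w^i_j$ is adjacent to $v^\ast$, the edges $(v^\ast,w^i_j)$ exist; I want to show these edges lie on a common cycle, which forces them into the same biconnected component. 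The key tool is that $C_i$ is connected: there is a path inside $C_i$ between any two of the $w^i_j$, and combined with the two edges from $v^\ast$ this yields a cycle through $v^\ast$ witnessing that each pair $(v^\ast,w^i_{j_1})$, $(v^\ast,w^i_{j_2})$ is in the same block. Defining $B_0$ as the block containing $v^\ast$, this argument places every $w^i_j$ in $B_0$.

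The second and main part is to prove that every block $B\neq B_0$ is confined to the descendants of a single $w^i_j$. First I would observe that since $v^\ast\in B_0$ and blocks share at most one vertex (a cut vertex), no block other than $B_0$ contains $v^\ast$; hence every $B\neq B_0$ consists entirely of vertices of $C_i$. Next I would use the fundamental fact about block–cut trees: any block $B\neq B_0$ meets $B_0$ in at most one cut vertex, and more usefully, the block decomposition is traversed outward from $B_0$, so $B$ is ``separated'' from $v^\ast$ by a cut vertex $c$. I plan to argue that this separating cut vertex $c$ must be one of the $w^i_j$, or a descendant thereof, and that consequently all vertices of $B$ are descendants of some single $w^i_j$ in the BFS tree. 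The crucial geometric input is that in the BFS tree any node $u\in C_i$ at depth $>i$ has its tree-path back to $v^\ast$ passing through exactly one of the $w^i_j$ (its unique ancestor at depth $i$), so the ``root side'' of any separation is governed by a single $w^i_j$.

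To make the confinement precise I would proceed by contradiction: suppose some block $B\neq B_0$ contains vertices that are descendants of two distinct nodes $w^i_{j_1}$ and $w^i_{j_2}$, or contains a vertex that is not a descendant of any single $w^i_j$. In the former case, biconnectivity of $B$ would supply two internally-disjoint paths between such vertices lying wholly in $B$, and by tracing these paths against the BFS-tree structure I would construct a cycle that necessarily passes through $v^\ast$ (since the only way to move between the subtrees rooted at distinct $w^i_j$ inside $C^\ast_i$ is through $v^\ast$ or through non-tree edges), forcing $B=B_0$, a contradiction. This uses the BFS-tree property that non-tree edges of $G[i,i+2r]$ connect nodes whose depths differ by at most one, so they cannot shortcut across the subtree boundaries in a way that avoids the $w^i_j$ entry points.

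The step I expect to be the main obstacle is precisely this confinement argument, i.e.\ ruling out a block $B\neq B_0$ that straddles two of the subtrees rooted at different $w^i_j$. The difficulty is that planar graphs can have many non-tree (``horizon'' and ``upward'') edges within a single layer $G[i,i+2r]$, and one must verify carefully that none of these edges can create a biconnected bridge between two distinct $w^i_j$-subtrees without passing through $B_0$; this requires a clean case analysis on the port-states (``horizon'', ``upward'', ``downward'') of the non-tree edges relative to the depths of their endpoints, and invoking that any cycle linking the two subtrees must traverse the common ancestor region that was contracted into $v^\ast$. Once this is settled, identifying the unique $w^i_j$ whose descendants contain $V(B)$ follows by taking the shallowest BFS-depth vertex of $B$ and noting it is either a $w^i_j$ or lies below exactly one of them.
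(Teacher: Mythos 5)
Your first half is sound: the cycle built from an interior path of $C_i$ plus the two edges $(v^\ast,w^i_{j_1})$, $(v^\ast,w^i_{j_2})$ does put all the edges incident to $v^\ast$, and hence $v^\ast$ and all the $w^i_j$, into one block $B_0$; and the observation that every node of $C_i$ lies below exactly one $w^i_j$ (its tree path up to depth $i$ stays inside $G[i,i+2r]$, hence inside $C_i$) is correct and essential. The problem is the step you yourself flag as the crux. Your contradiction argument rests on the claim that ``any cycle linking the two subtrees must traverse the common ancestor region that was contracted into $v^\ast$,'' and that claim is simply false; no case analysis on port states will rescue it. Concretely: take the planar graph with root $v_0$, depth-$1$ nodes $a,b$, depth-$2$ nodes $a_1,a_2$ (children of $a$) and $b_1,b_2$ (children of $b$), plus horizon edges $(a_1,b_1)$ and $(a_2,b_2)$; with $i=1$ the cycle $a,a_1,b_1,b,b_2,a_2,a$ links the subtrees of $w^i_1=a$ and $w^i_2=b$ and avoids $v^\ast$ entirely. (The lemma survives here only because that cycle's block happens to be $B_0$.) So the cycle supplied by biconnectivity of a hypothetical straddling block $B$ lies in $B$, avoids $v^\ast$, and contradicts nothing. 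There is also a smaller slip earlier: ``blocks share at most one vertex'' does not show that $v^\ast$ lies in no block other than $B_0$ --- a cut vertex belongs to several blocks; the correct reason is that every edge incident to $v^\ast$ is of the form $(v^\ast,w^i_j)$, all such edges lie in $B_0$, and each edge lies in exactly one block.

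The repair is short and needs neither planarity nor the combinatorial embedding. Show that every edge of $C_i$ whose endpoints lie under distinct roots $w^i_{j_a}\ne w^i_{j_b}$ already belongs to $B_0$: if $(u_1,u_2)$ is such an edge, the BFS-tree paths from $u_1$ up to $w^i_{j_a}$ and from $u_2$ up to $w^i_{j_b}$ stay inside $C_i$, lie in disjoint subtrees, and together with $(u_1,u_2)$, $(w^i_{j_a},v^\ast)$, $(v^\ast,w^i_{j_b})$ form a simple cycle; hence $(u_1,u_2)$ is in the same block as $(v^\ast,w^i_{j_a})$, i.e.\ in $B_0$. Consequently a block $B\ne B_0$ contains neither $v^\ast$ nor any crossing edge, and since $B$ is connected its vertex set lies inside a single subtree, which is exactly the claim. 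Alternatively, your first sketch (the separating cut vertex) can be completed instead of the contradiction route: if $c$ is the cut vertex separating $B$ from $B_0$ in the block--cut tree, then every $u\in V(B)\setminus\{c\}$ has all its paths to $v^\ast$ passing through $c$, in particular the path going up the BFS tree from $u$ through $w^i_{j(u)}$ to $v^\ast$; since $c\ne v^\ast$, $c$ is an ancestor of $u$, so $V(B)$ consists of (weak) descendants of $c$, and $c$ itself lies under some $w^i_j$. Note, for calibration, that the paper states this lemma without proof, so the above is the argument you should be measured against rather than anything in the text.
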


The distributed tree decomposition of $C_i$ can be constructed as
follows: Starting from some $w^i_j$ ($1 \le j \le l$), do a
depth-first-search to decompose $C^\ast_i$ into blocks by imagining
that there is a virtual node $v^\ast$, then $v^\ast$ and all
$w^i_1,\cdots,w^i_l$ belong to a unique biconnected component $B_0$.
Construct an ordered tree decomposition for each block, and do some
special treatments for $B_0$ (when the virtual node $v^\ast$ is
visited). Finally connect these tree decompositions together in a
suitable way to get a complete tree decomposition of $C_i$.

Moreover, a post-order traversal over the BFS tree can be done to
construct the tree decompositions for connected components of all
$G[i,i+2r]$'s one by one.

With the ordered tree decomposition for $C_i$, $\psi^{(r)}(x)$ can
be evaluated over $C_i$ as follows: the node $w^i_j$ first
transforms $\psi^{(r)}(x)$ into a formula
$\psi^\ast(U_1,\cdots,U_{k+1})$ over the vocabulary $\{E_1,E_2\}
\cup \left\{P_c | c \in \Sigma_k\right\}$ satisfying the condition
in Lemma~\ref{lem-MSO-translation}. Then from $\psi^\ast$,
constructs an automaton $\mathcal{A}$ over $\Sigma^\prime_k$-labeled
trees, and sends $\mathcal{A}$ to all the nodes in $C_i$. The
ordered tree decomposition is then transformed into a
$\Sigma_k$-labeled tree $\mathcal{T}^\prime$. Finally $\mathcal{A}$
is ran over $\mathcal{T}^\prime$ for three times to get
$\mathcal{A}(\mathcal{T}^\prime)$, and the evaluation result of
$\psi^{(r)}(x)$ over $C_i$ is distributively stored on the nodes of
$C_i$.

Because the most intricate part of Phase IV is the distributed
construction of an ordered tree decomposition for each connected
component $C_i$ of $G[i,i+2r]$. In the following, we only illustrate
how to do a post-order traversal of the BFS tree to decompose each
connected component $C_i$ of $G[i,i+2r]$ into blocks and construct
an ordered tree decomposition for each block of $C_i$, and omit the
other parts of Phase IV.

\begin{table}[ht]
  \centering
  \small

\end{table}
\clearpage

\section{Distributed Evaluation of FO over planar networks: Phase V}\label{appendix-FO-planar-phase-v}

Label nodes in $\bigcup_i P_i$ with $P$.

Then consider the evaluation of FO sentence $\varphi^\prime$ over
the vocabulary $\{E,P\}$, where

\begin{equation*}
\varphi^\prime:=\exists x_1 ... \exists x_s \left(\bigwedge
\limits_{1 \le i<j \le s} d(x_i,x_j)
> 2r \wedge \bigwedge \limits_{i} P(x_i)\right).
\end{equation*}

Starting from some node $w_1$ with label $P$, mark the vertices in
$N_{2r}(w_1)$ as $Q$, then select some node $w_2$ outside $Q$, and
mark those nodes in $N_{2r}(w_2)$ by $Q$ again, continue like this,
until $w_l$ such that either $l = s$ or all the nodes with label $P$
have already been labeled by $Q$.

If $l < s$, then label the nodes in $\bigcup \limits_{1 \le i \le l}
N_{4r}(v_i)$ as $I$. Then each connected component of $\langle I
\rangle^G$ has diameter no more than $4lr < 4sr$. We can construct
distributively a tree decomposition for each connected component of
$\langle I \rangle^G$, and connect these tree decompositions
together to get a complete tree-decomposition of $\langle I
\rangle^G$, then evaluate the sentence $\varphi^\prime$ by using
this complete tree decomposition.

\section{The proof of Theorem~\ref{thm:FO-UCNT-frugal}}

\noindent The proof of Theorem~\ref{thm:FO-UCNT-frugal} relies on a
normal form of FO($\#$) formulae.

\begin{lemma}\label{lem:FO-UCNT-norm}
FO($\#$) formulae can be rewritten into a Boolean combinations of
(i) first-order formulae and (ii) sentences of the form $t_1 =t_2$
or $t_1 < t_2$ where $t_i$ are second sort terms, and for each
second sort term $\# x. \varphi(x)$ occurring in $t_i$, $\varphi(x)$
is a first-order formula.
\end{lemma}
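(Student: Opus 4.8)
The plan is to exploit the fact that, under the stated syntax, every basic second-sort term $\#x.\varphi(x)$ binds its unique free variable $x$ and is therefore \emph{closed}: its value $(\#x.\varphi(x))^G$ depends only on $G$ and not on any assignment to first-sort variables. Consequently every counting atom $t_1 = t_2$ or $t_1 < t_2$ is a \emph{sentence}, and in a fixed structure $G$ it behaves exactly like a $0$-ary predicate (a propositional constant) whose truth value is independent of the surrounding first-order quantifiers. This observation is what drives the whole normalization, since it lets us move counting atoms freely across quantifier scopes.

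I would then argue by induction on the \emph{counting nesting depth} $d(\Phi)$ of an FO($\#$) formula $\Phi$, defined as the maximal number of $\#$-operators nested one inside another (pure first-order formulae having depth $0$, and $\#x.\psi$ having depth $1 + d(\psi)$). For the base case $d=0$ the formula is already first-order, i.e. of type (i). For the inductive step, let $C_1,\dots,C_N$ be the \emph{maximal} counting atoms of $\Phi$, that is, those not contained inside any larger counting term; each $C_k$ is a sentence. Because the $C_k$ are closed, substituting a truth value for each of them commutes with the quantifiers of $\Phi$ and is semantically sound; hence, writing $\Phi_\tau$ for the first-order formula obtained from $\Phi$ by replacing each $C_k$ according to a truth assignment $\tau \in \{0,1\}^N$, one has $\Phi \equiv \bigvee_{\tau}\bigl(\Phi_\tau \wedge \bigwedge_{k} C_k^{\tau(k)}\bigr)$, where $C_k^{1}=C_k$ and $C_k^{0}=\neg C_k$. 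Each $\Phi_\tau$ is a first-order formula with exactly the free variables of $\Phi$, hence already of type (i), so it only remains to put each $C_k$ into the required form.

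To normalize a single maximal counting atom $C_k$, say $t_1 = t_2$ (the case $t_1 < t_2$ being identical), I would apply the same conditioning trick one level down. The counted formulas $\psi(x)$ occurring in the basic terms of $t_1,t_2$ have counting nesting depth at most $d-1$; let $D_1,\dots,D_M$ be the maximal counting atoms appearing inside them. Since each $D_l$ is again a sentence, and in particular independent of $x$, replacing the $D_l$ by truth values turns every $\psi(x)$ into a pure first-order formula $\psi_\rho(x)$, and hence turns $t_1,t_2$ into terms $t_1^\rho,t_2^\rho$ all of whose basic terms count first-order-definable sets. This gives $C_k \equiv \bigvee_{\rho \in \{0,1\}^M}\bigl((t_1^\rho = t_2^\rho) \wedge \bigwedge_l D_l^{\rho(l)}\bigr)$, in which each $t_1^\rho = t_2^\rho$ is a counting sentence of the desired type (ii). Each $D_l$ is a sentence of counting nesting depth at most $d-1$, so by the induction hypothesis it is equivalent to a Boolean combination of first-order formulae and normalized counting sentences; substituting these normal forms back shows that $C_k$, and therefore $\Phi$, is a Boolean combination of type-(i) and type-(ii) formulae, completing the induction.

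The only genuine obstacle is bookkeeping rather than conceptual: one must check that the substitution of truth values is sound across quantifier scopes — which is exactly where closedness of the counting atoms is used — and that the counted formulas strictly descend in counting nesting depth so that the induction is well-founded; the distribution of the substitution over $+$, $\times$, $=$, $<$ and over the Boolean connectives is then routine. I would close by noting that this normal form is precisely what is needed for Theorem~\ref{thm:FO-UCNT-frugal}: the type-(i) first-order parts are handled by Theorem~\ref{thm:FO-bounded-degree}, Theorem~\ref{thm:FO-planar} and Theorem~\ref{thm:FO-Unary}, while a type-(ii) counting sentence is evaluated by first computing each first-order set $\{v \mid G \models \varphi(v)\}$ distributively, aggregating its cardinality (a number bounded by $n$, hence of size $O(\log n)$) up the BFS-tree, and performing the arithmetic and the final comparison at the requesting node.
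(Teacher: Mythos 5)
Your proof is correct and is essentially the paper's approach: the paper gives no details at all, asserting only that the lemma "can be done by a simple induction on the syntax of FO($\#$) formulae," and your induction on counting nesting depth --- driven by the observation that under the paper's syntax every basic term $\#x.\varphi(x)$ is closed, so every counting atom is a sentence that can be moved across quantifier scopes and extracted by case analysis over its truth value --- is a correct, fleshed-out realization of exactly that induction. The two places where care is needed (soundness of substituting truth values for closed atoms under quantifiers, and strict descent of the counting depth for the atoms $D_l$ inside counted formulae) are both handled explicitly in your argument.
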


The proof of the lemma can be done by a simple induction on the
syntax of FO($\#$) formulae.

\begin{proof} Theorem~\ref{thm:FO-UCNT-frugal} (sketch)

From Theorem~\ref{thm:FO-bounded-degree} and
Theorem~\ref{thm:FO-planar}, we know that FO formulae can be
evaluated over bounded degree and planar networks with only a
bounded number of messages sent over each link. From the normal form
of FO($\#$) formulae (Lemma~\ref{lem:FO-UCNT-norm}), it is
sufficient to prove that sentences of the form $t_1 = t_2$ or $t_1 <
t_2$ can be frugally evaluated over the two types of networks.

By induction, we can show that for all second sort terms $t$, $t^G$
is bounded by $n^{|t|}$ (where $|t|$ is the number of symbols in
$t$, and $n$ is the size of $V$). Therefore, $t^G$ can be encoded in
$O(\log n)$ bits.

At first we consider the computation of the term $\#x. \varphi(x)$
($\varphi$ is a first-order formula with only one free variable
$x$).

The requesting node starts the frugal evaluation of $\varphi(x)$
(Theorem~\ref{thm:FO-Unary}), then each node $v$ knows whether
$\varphi(v)$ holds or not. Now the requesting node can aggregate the
result of $\#x.\varphi(x)$ by using the pre-computed BFS-tree.

If $t_1$ and $t_2$ can be frugally computed, then $t_1 + t_2$, $t_1
- t_2$ and $t_1 \times t_2$ can be frugally computed as well by just
computing $t_1$ and $t_2$ separately, and computing $t_1 + t_2$,
$t_1 - t_2$ or $t_1 \times t_2$ by in-node computation. Thus all
FO($\#$) sentences of the form $t_1 = t_2$ and $t_1 < t_2$ can be
frugally computed.
\end{proof}

\end{appendix}

\end{document}